\documentclass[manuscript,screen,authorversion]{acmart}

\AtBeginDocument{%
  }

\setcopyright{acmlicensed}
\copyrightyear{2018}
\acmYear{2018}
\acmDOI{XXXXXXX.XXXXXXX}

\acmISBN{978-1-4503-XXXX-X/2018/06}

 \usepackage{tikz,calc}
 \usetikzlibrary{decorations.markings}
 \usepackage{soul}
 \setul{}{0.75pt}
 \setuloverlap{1.5pt}

\usepackage{mwe}
\usepackage{lscape,booktabs,longtable}
\usepackage{amsmath}
\usepackage{comment}
\usepackage{enumerate}
\usepackage[ruled,vlined]{algorithm2e}
\usepackage[graphicx]{realboxes}
\usepackage{todonotes}
\usepackage{tabularx}

\newcommand{\rev}[1]{\textcolor{black}{#1}}

\usepackage{subcaption}

\usepackage{cleveref}

\usepackage{outlines}
\usepackage{tikz}
\usepackage{url}
\usepackage{xcolor}
\allowdisplaybreaks
\usetikzlibrary{arrows, positioning, shapes}
\tikzstyle{process} = [rectangle, minimum width=3cm, minimum height=1cm, text centered, draw=black, fill=white]
\tikzstyle{queue} = [circle, minimum size=1cm, text centered, draw=black, fill=white]
\tikzstyle{arrow} = [thick,->,>=stealth]

\begin{document}

\title{Due Process on Hold:
A Queueing Framework for Improving Access in SNAP}

\author{Andrew Daw}
\authornote{Alphabetical order}
\authornotemark[0]
\email{dawandre@usc.edu}
\affiliation{%
  \institution{University of Southern California}
  \country{USA}
}
\author{Chloe Pache}
\authornotemark[0]
\affiliation{%
  \institution{University of California, Santa Barbara}
    \country{USA}
}

\author{Angela Zhou}
\authornotemark[0]
\email{zhoua@usc.edu}
\affiliation{%
  \institution{University of Southern California}
    \country{USA}
}


\begin{abstract}
The U.S. social safety net delivers essential services at mass scale, but access burdens persist, as congested contact or call centers serve as a primary mode of application completion and assistance. In Holmes v. Knodell, Missouri’s SNAP call centers were so congested that nearly half of all application denials were procedural—caused by applicants’ inability to complete required interviews, rather than underlying ineligibility. The judge ruled these system failures led to a violation of procedural due process.
We propose a performance evaluation framework based on queueing models from operations research and management to assess and improve access in such systems. Operational access failures of call centers are distinct from prior automation failures in benefits provision. \textit{Emergent arbitrariness} arises from interactions between system dynamics and access demand, rather than from an explicit algorithmic rule, making diagnosis and repair inherently system-level. We develop a queueing model that incorporates phenomena that distinguish social services from standard service domains, redials and abandonment, through which backlogs generate \textit{endogenous congestion}. Standard queueing guidance from Erlang-A that does not address endogenous congestion fundamentally understaffs, which could lead to persistent shortfalls in practice. Using a fluid approximation, we derive steady-state performance metrics to analytically characterize the impacts of bundled staffing and service delivery changes. We fit model parameters to call-center data disclosed in court documents. Our queueing model can support ex-ante evaluation and design of access systems, inform policy levers for improving access, and provide evidence about whether applicants are afforded a meaningful opportunity to be served at scale.
\end{abstract}

\begin{CCSXML}
<ccs2012>
   <concept>
       <concept_id>10010405.10010481</concept_id>
       <concept_desc>Applied computing~Operations research</concept_desc>
       <concept_significance>500</concept_significance>
       </concept>
   <concept>
       <concept_id>10002944.10011123.10011674</concept_id>
       <concept_desc>General and reference~Performance</concept_desc>
       <concept_significance>500</concept_significance>
       </concept>
   <concept>
       <concept_id>10002944.10011123.10011124</concept_id>
       <concept_desc>General and reference~Metrics</concept_desc>
       <concept_significance>500</concept_significance>
       </concept>
   <concept>
       <concept_id>10002950.10003648.10003688.10003689</concept_id>
       <concept_desc>Mathematics of computing~Queueing theory</concept_desc>
       <concept_significance>500</concept_significance>
       </concept>
 </ccs2012>
\end{CCSXML}

\ccsdesc[500]{Applied computing~Operations research}
\ccsdesc[500]{General and reference~Performance}
\ccsdesc[500]{General and reference~Metrics}
\ccsdesc[500]{Mathematics of computing~Queueing theory}


\maketitle

\section{Introduction}




Social safety net programs deliver essential services and benefits at a massive scale under major resource constraints. Operational challenges in service delivery can undermine efforts to achieve policy goals and program effectiveness. Recently, a district court case, \textit{Holmes v. Knodell}, ruled that Missouri Department of Social Services’ (DSS) Supplemental Nutrition Assistance Program (SNAP) call center was so congested that enrollees were unable to call through to complete required interviews and applications, constituting a procedural due process violation.
\textit{Procedural denials} occur when an application is denied purely on procedural or administrative grounds, such as missing an interview, rather than the eligibility facts of the case. In Missouri, such procedural denials were as high as 50\% of all denials. 
To the best of our knowledge, this is the first time that an \textit{operational system (the call center) was found to violate procedural due process}. While the case ruling has concluded, the story is not complete. Missouri DSS has been ordered to improve call center operations under severe staffing constraints, reflecting a broader challenge across public-sector service delivery: how to provide meaningful access at scale with limited resources. How should Missouri DSS improve their call center design to provide \textit{de facto} access to complete \textit{de jure} application requirements?

The operational failures of long wait times in automated systems are systemic both in \textit{origin} and \textit{future improvement}, differing from prior automation failures in benefits provision. We do not find available technical guidance that addresses the specific challenges of social service delivery. 
Queueing theory, which is a key subdiscipline of operations research, is the natural tool for performance analysis. Call centers have been a particularly fruitful domain, as they feature the core ingredients of the field: large volumes of customers, random fluctuations in the times between arrivals and the durations of service, and strong potential for long wait times despite the risks of impatient customers. 
Indeed, federal guidelines for performance management of SNAP call centers refer to queueing theory and mention the famous Erlang-A formula, while also conceding that the \textit{specific} nature of social services means that standard off-the-shelf models may not directly apply; hence, the assumptions for the Erlang-A formula do not actually hold in practice \citep{fns}. 

Here, we develop a queueing model to fill the gap between standard call center models and special challenges in social services. 
Call centers for SNAP, and benefits more broadly, specifically suffer from \textit{abandonment} where busy callers, some of whom may be paying by the minute \citep{howe_new_2025}, cannot stay on hold forever and may drop their call. But, the key difference from standard models is that callers applying to satisfy SNAP interview requirements who abandon are likely to call back, or \textit{re-dial}, since SNAP and other social safety net programs serve urgent needs without private outside options. Moreover, interviews often surface additional follow-up steps resulting in future calls. These behaviors, distinctive to social services, \textit{compound operational deficiencies}, introducing stark feedback loops where prior backlogs result in abandonment that only worsens future congestion due to re-dials. Our model captures these distinctions and is therefore relevant beyond SNAP alone, for example for backlogs for unemployment insurance or Social Security Administration (SSA) call centers, which have suffered major backlogs and similar high-abandonment issues at times including prior economic crises \citep{pahlka2023recoding,Coffey_2017,ho2025evaluation} and current understaffing \citep{Rain_Natanson_25AD}.


\rev{We focus on a fluid approximation of the queueing model, which is a deterministic approximation to the queueing system under large arrival volumes and with many servers, both of which are evident in the  SNAP call center data from \emph{Holmes v. Knodell} \citep{holmesKnodellPlaintiffsMSJ2024}. Agencies could use our framework to understand trade-offs between different call center system design choices and resource expenditures. Our analysis in the fluid model enables us to vary system parameters and answer questions such as  ``if I can only hire 10 more staff, by how much would I have to reduce handling time in order to reduce wait time below 20 min?'' To preview how this can be useful, first we note that DSS estimated that they needed at least 150 more staff\footnote{\rev{DSS had 200-300 staff split among different tasks including handling calls and scheduled interviews, but the agency had earlier estimated that they needed >400 staff dedicated to answering calls alone to answer 80\% of calls within 2 minutes \citep{holmesKnodellPlaintiffsMSJ2024}.}}, most likely using the Erlang A formula, which is the only the only operations guidance we see in federal documents \citep{fns}. But DSS already faced difficulties with retention, let alone hiring and onboarding new staff which takes a long time and may not grant immediate relief. Our quantitative model shows the same performance improvement can be achieved by \textit{bundling operational changes} with fewer additional staff. In SNAP, the Erlang-A's model assumption of permanent abandonments fundamentally fails: applying callers don't have outside options, so abandoned calls turn into re-dials, and therefore congestion itself generates future arrivals. Our finer-grained model includes the \textit{endogenous congestion} generated by re-dials and includes key dynamics that arise from program design and implementation. Accordingly, using our model, policymakers can directly assess \textit{how changes in program implementation and operations} (such as changes in timeframe for obtaining follow-up documentation) affect system performance. }

The focus of our work is in developing a performance evaluation and improvement framework, expanding upon classical queueing models from operations research, that can support efforts to  improve access in social services. Our model, though reliant on some probabilistic assumptions, allows us to estimate \rev{six key performance metrics: the mean number of waiting callers, the mean waiting time, the average speed to answer (i.e., the waiting time for callers who don't abandon), the procedural denial rate, the endogenous congestion from re-dials, and the endogenous congestion from re-certification. The first three waiting-based metrics are hallmark measures of performance in essentially any call-center, and the latter three are specifically relevant in this SNAP call center setting, where the dynamics of the benefits certification process creates patterns of returning callers that can magnify the load on the system and are not typical in other call center settings.}
The queueing model captures the \textit{feedback loops} that arise when relying on call centers for required interviews and re-certifications that themselves generate follow-up contacts. 
We introduce our queueing model, its fluid approximation, and discuss how to calibrate the model to data, which we do using court documents from \textit{Holmes v. Knodell}. We introduce a simple dashboard to explore the impact of design changes on performance metrics. Our model admits analytical expressions for \rev{the six focal performance metrics}, which we analyze to obtain insights on how different design changes impact performance. \rev{Furthermore, the dashboard demonstrates how the model also lets us evaluate many other quantities beyond the six highlighted in the paper.} Finally, we use our calibrated model to illustrate how an agency might combine staffing and system/policy design levers, like reducing average handling time or lengthening re-certification cycles, in order to achieve greater improvement with less staffing. 
\section{Background}
\paragraph{Background on SNAP}
The Supplemental Nutrition Assistance Program (SNAP), formerly the Food Stamp Program, is the largest means-tested social assistance program in the United States and provides monthly food-purchasing benefits to low-income households. SNAP is federally authorized and funded but administered by state and local agencies, which are responsible for determining eligibility and operating application and recertification processes within federal rules. In fiscal year 2024, ``SNAP served an average of approximately 41.7 million individuals per month—about 12.3\% of the U.S. population" \citep{ers_snap_key_stats}.\footnote{Participation by state varies widely, from as high as 21.2\% in New Mexico to 4.8\% in Utah \citep{ers_snap_state_variation}.} Studies of SNAP and other benefits programs find improvements not only in food security, but also health and labor outcomes, as well as stabilization against macroeconomic downturns \citep{hoynes2015us}. Nonetheless, many who are eligible for SNAP do not receive benefits. Program eligibility is complex, including income determination and verification, deductions, asset limits, household definition, and work requirements --- as is the application \citep{crs_snap_r42505}. Such complexity introduces ``administrative burden,"  the requisite learning, psychological load, and compliance costs that citizens experience in interactions with government \citep{herd2025administrative}. 
Many who are denied benefits because of operational frictions (missed interviews, application errors) are actually eligible \citep{homonoff2021program}. 

\textit{The role of the interview, applications and recertification process: } Call and contact centers remain one of the primary ways that potential enrollees interact with SNAP program administration \citep{fns}.\footnote{Services provided by call centers can provide everything from general information to official case services like updating and processing changes, conducting interviews, providing updates on processing, appointment scheduling and more. Staffing SNAP call centers is difficult because of SNAP policy complexity and the need to access live case information; call center staff touching the eligibility process must be merit system personnel \citep{fns}. As a result, there is large turn-over and chronic under-staffing, and it can take a long time to onboard new staff  \citep{Coffey_2017} --- Missouri DSS faced these challenges as well.} 
Key administrative \textit{checkpoints} \citep{kim2025administrative} in the SNAP process include initial application, which requires an interview, and re-certification every 6 to 12 months, which also requires an interview. The re-certification form measures changes of circumstances (including costs/income, household composition, etc.) 
which may change eligibility; any changes require submitted verification. During interviews, the caseworker may ask for additional documents to prove certain eligibility requirements or provide additional detail. Initial interviews also often serve as opportunities to inform applicants about complex SNAP-specific definitions. 

Missouri DSS has a waiver from Food and Nutrition Services (FNS), the federal agency administering SNAP, to implement interview procedures deviating from federal guidelines - namely, DSS does not need to schedule interviews, but must provide applicants with a notice (interview letter) instructing them to contact the Call Center within 5 days of submitting their application. If they do not, they receive a notice of missed interview informing that they need to complete the interview within 30 days of submitting their application. 

More broadly, there are no uniform performance standards for operational performance for interviews.\footnote{\citet{howe_new_2025} comment on the legal context of \textit{Holmes v. Knodell} and the challenge of rendering on-the-ground wait times legally cognizable. Existing enforceable laws did not anticipate the omnichannel service and access needs of today's day and age.} Although interviews are \textit{required} by federal regulation to be completed within 30 days, federal regulation does not stipulate performance standards on service quality of \textit{how} interview opportunities should be granted to enrollees. It remains unclear what \textit{exact} metrics might lead to judicial findings of procedural due process violations.\footnote{For example, the judge noted that in Missouri near 50\% of denials were procedural. But other jurisdictions also have high rates of procedural denials that reflect persistent access barriers; for example, Los Angeles in 2018-2019 had a baseline rate of $\sim30$\% procedural denials \citet{giannella2024administrative}.} 

\textit{The grounds for the judicial decision:} 
The case facts establish that the call center and administrative process (including low-staffed resource centers) to handle SNAP applications are overwhelmed. Plaintiffs contend that wait times are unacceptable, with some plaintiffs calling ten to twenty times and unable to call through, and legal services advocates waiting in the queue for hours before getting disconnected \citep{holmesKnodellPlaintiffsMSJ2024}. DSS is understaffed.\footnote{DSS has 200-300 staffers split in between informational (non-interview) calls and interview calls, while it estimates it needs 400 interview-only staffers to meet private call center “industry standards” of answering 80\% of calls within 2 minutes} Thirty-two percent of all calls that made it into the interview-only queue were abandoned by the caller. Fifty percent of all applicants were denied for failure to complete the interview in one month during 2023, due only to a failure of the system to offer a reasonable opportunity to interview. 
The judge concludes that DSS was unsuccessful in providing ``timely, accurate and fair service,'' that the ``reliance on an inadequate automated system and understaffed offices to provide interviews also violates Defendant’s obligation under SNAP and Defendant’s on-demand waiver,'' and therefore    denial based on the automated system is a wrongful denial of benefits based on the arbitrariness of whether enrollees can call through \citep{holmesKnodell2024}.\footnote{``Here, applicants are being denied benefits not based on the merits of their application but on the failure to obtain an interview. Further, the failure to interview is a direct result of Defendant's inability to provide an efficient and successful system that allows applicants to schedule and complete an interview within the required time frame.'' \citep{holmesKnodell2024}} 

\textit{Background on procedural due process:} Much prior work in the algorithmic fairness community has focused on connecting algorithmic harms or performance legal theories of discrimination. \textit{Holmes v. Knodell} is a \textit{procedural due process}\footnote{\textit{Procedural due process} requires that the state follow certain procedures when depriving someone of their property, such as giving notice, an opportunity to be heard (for example, in court), and a neutral decision-maker  \citep{ames2020due}. 
Courts have ruled, since the Goldberg v. Kelly, 1970, Supreme Court case, that social safety net and welfare programs represent significant property interests and hence are subject to procedural due process protections. 
Traditionally, such protections, drawing upon due process in courtroom procedures, might take the form of adversarial hearings, appeals, tribunals, or legal representation. } case, like other cases of algorithmic harms in benefits provision.  
Scholars in administrative law have argued that such post-hoc appeals and hearings may not provide strong enough procedural protections for the massive scale of social safety net programs. \citet{mashaw1973management} studies the Social Security Disability Insurance (SSDI) program and adjudication process and argues for the ``management side of due process.'' 
Mashaw argues that burdensome appeals surface a small fraction of the agency's actual errors. Instead, fairness, accuracy, and timeliness are properties of \textit{system performance}, not just rare appeals. Pro-active \textit{continuous monitoring and performance management} can provide more systematic guarantees than individualized procedure that is overly reliant on ex-post correction. {Fairness and accurate determinations are therefore emergent property of a system's performance as a whole, not necessarily individual procedural protections}. 
\citep{ho2017managing,ames2020due} connect these arguments about performance management and quality assurance programs in agencies 
with modern calls for AI auditing, algorithmic accountability, and performance management, including \citep{ho2025evaluation} an example audit of benefits-assistance chatbots for UI assistance.

We share a similar high-level emphasis on \textit{the role of performance evaluation in operationalizing due process in benefits provision}. However, we focus specifically on call center performance, which introduces distinct technical challenges. 






\paragraph{Emergent systemic arbitrariness from automated systems.}
Automated systems have introduced great harms in crucial social safety net programs and benefits provision \citep{eubanks2018automating,btah,richardson2019litigating}, due to flawed data, problematic system design choices, or inherent system limitations \citep{gules2025algorithmic}. 
\textit{Holmes v. Knodell} introduces new challenges because the problematic service deficiencies, like long wait times and inability to reach a live representative, are systemic consequences of the automated system, rather than a singular ``incorrect" algorithmic rule or implementation quirk that clearly violates procedural due process protections. 
We call this \textit{emergent arbitrariness}, which differs from other algorithmic failures in benefits teach where algorithmic rules directly violate \textit{de jure} procedural due process (e.g., defective notice or fully automated determinations).
In a referenced prior case as reference for arbitrary agency action, Barry v. Lyon, an added automated match imposed an extra eligibility requirement beyond those federally mandated, violating the SNAP Act. Yet, the fix is simple --- remove the automated match. In contrast, for SNAP call centers, the \textit{origins of} and \textit{future improvements} for the \textit{emergent arbitrariness} are more complex than a single system design flaw. The judge can mandate that Missouri DSS must modify its administration process to ensure the opportunity to interview, but it remains unclear \textit{how}.

\section{Related Work}



\textit{Informal and formal performance evaluation of government benefits systems:}  
In formal quality control processes for SNAP, FNS collects random samples of household case files and analyzes eligibility determinations and overpayments/underpayment errors \citep{snapqc}. However, these formal performance evaluations focus on payment errors, rather than other system performance measures including administrative burdens, access, or other measures of otherwise-eligible households that nonetheless do not complete applications or receive benefits. Regarding call center performance specifically, there is wide heterogeneity in evaluation structure and specific metrics used across different government agencies and programs. (See \citet{Coffey_2017} for an overview for unemployment insurance\footnote{The Department of Labor contracted a consulting firm to survey different state leaders about their operational practices in call centers for unemployment insurance. They found wide variation in exact operational practices, but common use of strategies like cross-training and pooling queues across multiple call centers, while finding that many states could use more forecasting and queue management tools \citep{Coffey_2017}.}). However, the IRS has long provided assistance to tens of millions of Americans by phone, and has long designed its own call center quality and customer experience evaluations \citep{rosage2004evolution}.  
In the absence of formal performance assessments and real-time reporting, outsiders may also run informal audits to report performance metrics\footnote{Earlier iterations of the IRS' call center evaluation included "mystery-shopper" like programs, with the downside that queries are artificial rather than actual customer concerns \citep{rosage2004evolution}.} \citep{dube2021note}.
Sen. Elizabeth's Warren's office made hourly calls to SSA, finding wait times of 100 minutes and that only 50\% of calls connected to a live representative \citep{warren}. Specifically in the context of \textit{administrative burdens and SNAP}, recent works study changes in interview operations that introduce or improve administrative burdens, demonstrating beneficial impacts of interview flexibility on approvals and long-term participation \citep{giannella2024administrative,kim2025administrative} or barriers from less time to reschedule \citep{homonoff2021program}.\footnote{\citet{homonoff2021program} find that later recertification interviews, with less time for rescheduling, decrease recertification success, including for truly eligible clients. 
\citet{giannella2024administrative} study a field experiment of applicant-initiated \textit{flexible} on-demand interviews in Los Angeles, increasing approvals by $6\%$ and long-term participation by $2\%.$ \citet{kim2025administrative} run a field experiment comparing text-message reminders of flexible interviews vs. mailed reminders in Boulder County, Colorado, finding earlier and 10\% more likely interview completion.  }

\textit{Other operational analysis of public-sector queueing systems:} 
Queueing analysis is a mature discipline, and prior studies have also focused on public sector systems, albeit not the SNAP access system here. Some prior studies focus on operations in the judicial case management system. \citet{freund2024dedicated} study the queueing system in immigration courts and analyze fairness/efficiency properties of the queueing discipline, as well as strategic considerations specific to when waiting provides utility \citep{freund2025regulating}. \citet{bakshi2025service} study delays in judicial case management. Judicial case management differs from our eligibility/administrative burden setting: in court docket management, cases stay ``live'' for extended periods of time, while for social services, the interview requirement must be completed within 30 days. \citet{anunrojwong2023information} study information design in a general model to target high-need individuals without outside options to manage a congested queue for social services. 

Although classical literature on ordeal mechanisms \citep{nichols1982targeting} posits that ordeals serve a targeting function to dissuade those who are ineligible, the reality of the situation is that administrative burden \citep{herd2025administrative} often prevents actually-eligible individuals from taking up beneficial services \citet{homonoff2021program}. \citep{garg2025heterogeneous} outlines empirically documented gaps in participation and access in other mechanisms.Our recommendations focus on bundled staffing and service delivery interventions that can mitigate current severe blockages in access, rather than achieving finer-grained targeting, which is an interesting question for future work. 

\rev{\textit{Welfare implications of queueing-mediated service provision:} Many service and health interventions in general experience capacity constraints, though this may not be explicitly modeled in explicit system or causal analyses of performance \citep{miles2019causal,boutilier2024operational}, introducing potential social welfare risks in scale-up of otherwise beneficial social interventions. \citet{liu2024redesigning} study service-level agreements under exogenous demand arrivals in municipal operations in a queuing model.
}

 \textit{Other technical work and benefits provision:} \citet{pahlka2023recoding} provides an overview of civic tech and digital services in the federal government, including key benefits tech infrastructure. \citet{escher2020exposing} audits benefits \textit{eligibility calculators} for potential errors. \citet{jo2025ai} explores AI and trust in LLM chatbots supporting SNAP applicants. \citet{koenecke2023popular} audits ad allocation budgets for SNAP outreach and in a survey, finds broad support for equity considerations. Recent evaluations of LLM and chatbots in social services include an audit of UI chatbot support \citep{ho2025evaluation} and \citet{gosciak2026llms} evaluates LLM-chatbot assistance for caseworkers in a randomized-controlled trial, finding that such chatbots can support caseworkers in answering factual questions when they are accurate, but optimal AI-human assistance strategies remain crucial. Recent interest in AI surfaces new opportunities for service redesign that could result in shorter service times or self-service channels, and can be represented in our framework as changes in model parameters.

\section{Queueing Model}




Though the conventional approach to queueing models of call centers does recognize that customers may leave the queue before reaching a service agent, they typically assume that abandoning customers are lost forever \citep{gans2002telephone}. Similarly, customers who complete service are typically thought of as departing the system once and for all. 
For instance, the canonical queueing model of customer abandonment, the Erlang-A, supposes that every customer has an exponentially distributed patience time, such that, if their wait time reaches this patience point, they abandon the queue and leave the service entirely \citep{mandelbaum2005palm,mandelbaum2007service}. While it is certainly not desirable for a customer to abandon, the only ``cost'' paid per abandonment in this model is the lost customer; in fact, on the aggregate, the Erlang-A inherently views abandonment as a panacea for an overwhelmed system, offering stability for a service that otherwise could not keep up with demand. 

Unfortunately, these assumptions are not well-aligned with the SNAP context. Because SNAP is an essential service and prospective enrollees must first complete the interview process, customers who abandon the queue are very likely to call back later. Moreover, because the benefits require regular re-certification, successfully served customers are due to eventually return. Hence, the ``cost'' of abandonment compounds for SNAP call centers: there is the same negative service experience upon abandonment as is modeled by the Erlang-A, but, moreover, the case data shows that the majority of these abandoning callers will soon call back. Hence, unlike what is modeled by the Erlang-A,  abandonment is both an undesirable immediate outcome and an \emph{endogenous driver of congestion} over the long-run. Not only are these real-world features of the SNAP context missing from the Erlang-A model, the Erlang-A will \emph{fundamentally under-staff} system that have endogenous congestion (and we will demonstrate this in our case study on SNAP data; see Table~\ref{tab:compact_staffing_benchmark}). 

With those dynamics in mind, we build from prior work on \emph{re-dials} and \emph{re-connects} in \citet{ding2015fluid} to capture four different possible paths, or ``orbits,'' in which callers may return to the SNAP call center again after abandoning or completing service, with two ``orbits'' from abandonment and two ``orbits'' from completed service.\footnote{In practice, lost customers includes those who abandoned, disconnected, or were blocked, which we collapse into one phenomenon in the model.}
For abandonment, the case data and documents show that not only are abandoning callers quite likely to call back, there are different cadences to when those re-dials occur. Some abandoning callers call back on the order of hours later, whereas others call back a number of days later. We deem these short and long re-dial orbits, respectively. These re-dial orbits highlight the \textit{negative feedback loops} in congested social services. Then, for callers who are able to get connected to an agent and complete their call, we view two possible paths: either the call was successful, and the caller becomes enrolled in benefits, or the call was not successful, and the caller will eventually need to call back. For the latter, we direct unsuccessful service completions into the long re-dial orbit, and, for the former, we direct to an even longer returning orbit, on the order of months, to represent the eventual need to re-certify eligibility for benefits. To that end, we model two possible true exits from the system. First, on the enrolled orbit, we allow some probability that the recipients will not pursue their next re-certification; such attrition can arise from exogenous events that remove households from eligibility. Second, we suppose that there is some fraction of abandoning callers who truly do not ever call back, do not enter any orbit, and are thus lost from the system -- these are the model's procedural denials. 
\rev{\paragraph{Justifying the model design}
Before we continue the technical exposition of the model, let us first highlight evidentiary sources for our key model additions, namely abandonments converting into long and short redials. Significant abandonment is well-documented in topline metrics, in our data (54.5\% on average over the call centers, weighted by call volume - see \Cref{tab:call_center_stats}), and in other social service call centers \citep{warren}. Although the data doesn't track redials, key case facts turn on redials as evidence of system failures: Plaintiff Holmes dialed 3 times in a week, Plaintiff Dallas dialed 10 times in 3 days, and Plaintiff David called multiple times in a day and week \citep{holmesKnodellPlaintiffsMSJ2024}. The redial \textit{timing} is informed by case facts, policy context, as well as a time series analysis of call center volume and its lagged dependence (see \Cref{apx-callcenterarrival-ts} for more details). \textit{Long redials} arise from caseworker requests for additional documentation, such as employment/income verification or paystubs \citep{seim2026welfare}.\footnote{At least in California, applicants have at least 10 days to provide such information \citep{BenefitsCal}.} We verify this empirically via a time series analysis, which finds statistically significant temporal dependence both on $1-2$ and $7-9$ days prior. In summary, our key model choices triangulate across different sources including various policy documents and federal guidance \citep{fns_snap_overview}, court documents \citep{holmesKnodellPlaintiffsMSJ2024}, ethnography \citep{seim2026welfare}, and user experience/service design literature \citep{pahlka2023recoding} to isolate the key differentiators from prior queueing guidance.  }

\begin{figure}[htbp]
\centering
\begin{tikzpicture}[>=latex, rotate=90]

    \draw (-0.1,-0.1) -- ++(0,0.5cm) -- ++(3.6cm,0)  -- ++(0,-0.5cm);
\node[align=center] at (1.35cm,0.925cm) {\scriptsize \textcolor{blue}{Call Center}};
\node[align=center] at (1.7cm,0.85cm) {$Q(t)$};

\draw (0,0) -- ++(2cm,0) -- ++(0,-1.5cm) -- ++(-2cm,0);
\foreach \i in {1,...,4}
  \draw (2cm-\i*10pt,0) -- +(0,-1.5cm);

\draw (2.75,-0.75cm) circle [radius=0.75cm];

\draw[<-, thick] (0,-0.75) -- +(-50pt,0) node[above left] {$\lambda$};
\node[align=center] at (-1.55cm,-1.1cm) {\scriptsize \textcolor{magenta}{Fresh}};

\node at (2.75,-0.75cm) {$c$};

\draw (0.5,2.5) -- ++(2cm,0) -- ++(0,-1cm) -- ++(-2cm,0) --++ (0,1cm);
\node[align=center] at (1.5cm,2cm) {$B(t)$};
\node[align=center] at (1.15cm,2cm) {\scriptsize \textcolor{blue}{Enrolled}};
\draw[->, thick] (1.5cm,2.5cm) -- +(0,35pt) node[above right] {\small $\gamma B$};
\node[align=center] at (1.2cm,3.25cm) {\scriptsize \textcolor{magenta}{Depart}};

\draw[->, thick] (3.5,-0.75) .. controls (5,0) and (4,1.75) .. (2.5,2);
\node[align=center] at (4.45cm,0.5cm) {\small $\mu_{\mathsf{+}} (Q \wedge c)$};
\node[align=center] at (4cm,0.35cm) {\scriptsize \textcolor{magenta}{Re-enroll}};

\node[align=center] at (2.175cm,-2.25cm) {\scriptsize \textcolor{magenta}{Abandon}};
\draw[->, thick] (1,-1.5) .. controls (4,-2) and (3.25,-3.5) .. (2.5,-3.5);
\node[align=center] at (2.675cm,-2.8cm) {\small $\theta_\mathsf{S} (Q - c)^+$};
\draw[->, thick] (1,-1.5) .. controls (4.75,-2) and (4.75,-4.5) .. (2.5,-4.95);
\node[align=center] at (3.7cm,-3.525cm) {\small $\theta_\mathsf{L} (Q - c)^+$};
\draw[->, thick] (1,-1.5) .. controls (4.25,-2) and (5,-3) .. (5.5,-3.5);
\node[align=center] at (5.3cm,-2.6cm) {\small $\theta_\mathsf{A} (Q - c)^+$};
\node[align=center] at (5.35cm,-3.7cm) {\scriptsize \textcolor{magenta}{Lost}};

\draw[->, thick] (3.5,-0.75) .. controls (5.5,-2.5) and (4.25,-5) .. (2.5,-5.05);
\node[align=center] at (4.55cm,-4.4cm) {\small $\mu_{\mathsf{-}} (Q \wedge c)$};
\node[align=center] at (4.25cm,-4.8cm) {\scriptsize \textcolor{magenta}{Re-connect}};

\draw[->, thick] (0.5,-3.5) .. controls (-2,-3) and (-1,-1.25) .. (0,-1.05);
\node[align=center] at (-0.35cm,-2.85cm) {\small $\delta_\mathsf{S} R_\mathsf{S}$};
\draw[->, thick] (0.5,-5) .. controls (-2.5,-4.5) and (-1.5,-1.25) .. (0,-0.9);
\node[align=center] at (-0.55cm,-4.25cm) {\small $\delta_\mathsf{L} R_\mathsf{L}$};
\node[align=center] at (-0.8cm,-3.5cm) {\scriptsize \textcolor{magenta}{Re-dial}};

\draw[->, thick] (0.5,2) .. controls (-1.75,1.5) and (-1,-0.5) .. (0,-0.6);
\node[align=center] at (-1.1cm,1.4cm) {\small $\delta_\mathsf{B} B$};
\node[align=center] at (-1.4cm,1.2cm) {\scriptsize \textcolor{magenta}{Re-certify}};

\draw (0.5,-3) -- ++(2cm,0) -- ++(0,-1cm) -- ++(-2cm,0) --++ (0,1cm);
\node[align=center] at (1.5cm,-3.5cm) {$R_\mathsf{S}(t)$};
\node[align=center] at (1.2cm,-3.5cm) {\scriptsize \textcolor{blue}{Short Orbit}};

\draw (0.5,-4.5) -- ++(2cm,0) -- ++(0,-1cm) -- ++(-2cm,0) --++ (0,1cm);
\node[align=center] at (1.5cm,-5cm) {$R_\mathsf{L}(t)$};
\node[align=center] at (1.2cm,-5cm) {\scriptsize \textcolor{blue}{Long Orbit}};

\end{tikzpicture}
\vspace{-.3in}
\caption{Process flow diagram of a queueing theoretic model of the SNAP call center.}\label{queueFig}
\end{figure}

\paragraph{Definition of the SNAP call center stochastic model}

In a manner closely following \citet{ding2015fluid} but with further details added for the sake of the problem context, let us model the call center as a Markovian service network.  First, let $Q(t)$ be the number of callers either actively speaking to one of the $c \in \mathbb{Z}_+$ agents or waiting for the next available agent at time $t \geq 0$. Let $\lambda > 0$ be the rate of ``fresh'' or exogenous arrivals to the call center for those newly seeking to enroll in benefits. Suppose that each agent completes calls at rate $\mu = \mu_{\mathsf{+}} + \mu_{\mathsf{-}}$ for $\mu_{\mathsf{+}}, \mu_{\mathsf{-}} > 0$, where each completed call has a probability $\mu_{\mathsf{+}}/\mu$ of being {successfully} completed, meaning that the caller will not need to call back in order to complete enrollment and receive benefits, and probability $\mu_{\mathsf{-}}/\mu$ of being unsuccessful, meaning that the caller will need to call back. Let $B(t)$ track the number of successfully enrolled recipients at time $t$. Successfully enrolled recipients will either eventually ``qualify out'' of the system or otherwise need to re-certify. 
At rate $\delta_\mathsf{B} B(t)$, $B(t)$ will decrease by one with $Q(t)$ simultaneously increasing by one, and at rate $\gamma B(t)$, $B(t)$ will simply be decremented. 

Now, we will suppose that callers waiting for the next available agent each have independently and exponentially patience time with rate $\theta = \theta_\mathsf{A} + \theta_\mathsf{S} + \theta_\mathsf{L} > 0$ for $\theta_\mathsf{A}, \theta_\mathsf{S}, \theta_\mathsf{L} \geq 0$. With probability $\theta_\mathsf{A}/\theta$, the caller will abandon the queue and never call back to re-attempt enrollment. Then, with probability $\theta_\mathsf{S}/\theta$, the caller will hang up but eventually ``re-dial'' or attempt to call back after a (likely) short amount of time. Let $R_\mathsf{S}(t)$ be the number of callers at time $t$ who have left the queue but will soon re-dial, where each re-dial time is independent and exponentially distributed with rate $\delta_\mathsf{S}$. Finally, with probability $\theta_\mathsf{L}/\theta$, a caller will hang up and eventually call back but after a long time like the unsuccessful callers; hence, we count both types of callers among $R_\mathsf{L}(t)$. On aggregate, at rate $\delta_\mathsf{S} R_\mathsf{S}(t)$, $R_\mathsf{S}(t)$ will decrease by one with $Q(t)$ increasing by one, and, likewise, at rate $\delta_\mathsf{L} R_\mathsf{L}(t)$, one caller from $R_\mathsf{L}(t)$ will transition back to $Q(t)$.

In this way, the full model $(Q, B, R_\mathsf{S}, R_\mathsf{L})$ is a multi-station queueing network where the $Q$ station is a $\cdot/M/c$ queue and each of $B$, $R_\mathsf{S}$, and $R_\mathsf{L}$ are $\cdot/M/\infty$ queues. We visualize this queueing network in Figure~\ref{queueFig}. 
Formally, the stochastic process $(Q,B,R_\mathsf{S}, R_\mathsf{L})$ can be constructed through Poisson processes:
\begin{align}
Q(t)
\label{QDef}
&=
Q(0)
+
\Pi_\lambda\left(t\right)
+
\Pi_{\mathsf{B}}\left(\textstyle{\int_0^t} \delta_\mathsf{B} B(s) \mathrm{d}s\right)
+
\Pi_{\mathsf{S}}\left(\textstyle{\int_0^t} \delta_\mathsf{S} R_\mathsf{S}(s) \mathrm{d}s\right)
+
\Pi_{\mathsf{L}}\left(\textstyle{\int_0^t} \delta_\mathsf{L} R_\mathsf{L}(s) \mathrm{d}s \right)
\\
&
\quad
-
\Pi_{\mathsf{Q:A}}\left(\textstyle{\int_0^t} \theta_\mathsf{A} \left(Q(s) - c \right)^+ \mathrm{d}s \right)
-
\Pi_{\mathsf{Q:S}}\left(\textstyle{\int_0^t} \theta_\mathsf{S} \left(Q(s) - c \right)^+ \mathrm{d}s \right)
-
\Pi_{\mathsf{Q:L}}\left(\textstyle{\int_0^t} \theta_\mathsf{L} \left(Q(s) - c \right)^+ \mathrm{d}s \right)
\nonumber
\\
&
\quad
-
\Pi_{\mathsf{Q:+}}\left(\textstyle{\int_0^t} \mu_\mathsf{+}\left(Q(s) \wedge c \right) \mathrm{d}s\right)
-
\Pi_{\mathsf{Q:-}}\left(\textstyle{\int_0^t} \mu_\mathsf{-}\left(Q(s) \wedge c \right) \mathrm{d}s\right)
,
\nonumber
\\
B(t)
\label{BDef}
&=
B(0)
+
\Pi_{\mathsf{Q:+}}\left(\textstyle{\int_0^t} \mu_\mathsf{+}\left(Q(s) \wedge c \right) \mathrm{d}s\right)
-
\Pi_{\mathsf{B}}\left(\textstyle{\int_0^t} \delta_\mathsf{B} B(s) \mathrm{d}s\right)
,
\\
R_\mathsf{S}(t)
\label{RSDef}
&=
R_\mathsf{S}(0)
+
\Pi_{\mathsf{Q:S}}\left(\textstyle{\int_0^t} \theta_\mathsf{S} \left(Q(s) - c \right)^+ \mathrm{d}s \right)
-
\Pi_{\mathsf{S}}\left(\textstyle{\int_0^t} \delta_\mathsf{S} R_\mathsf{S}(s) \mathrm{d}s\right)
,
\\
R_\mathsf{L}(t)
\label{RLDef}
&=
R_\mathsf{L}(0)
+
\Pi_{\mathsf{Q:-}}\left(\textstyle{\int_0^t} \mu_\mathsf{-}\left(Q(s) \wedge c \right) \mathrm{d}s\right)
+
\Pi_{\mathsf{Q:L}}\left(\textstyle{\int_0^t} \theta_\mathsf{L} \left(Q(s) - c \right)^+ \mathrm{d}s \right)
-
\Pi_{\mathsf{L}}\left(\textstyle{\int_0^t} \delta_\mathsf{L} R_\mathsf{L}(s) \mathrm{d}s \right)
.
\end{align}
Here, as in the literature \citep[e.g.,][]{mandelbaum1998strong,pang2007martingale}, we use $\Pi_i(\cdot)$ for $i \in \mathcal{I} = \{\lambda, \mathsf{B}, \mathsf{S}, \mathsf{L}, \mathsf{Q:A}, \mathsf{Q:S}, \mathsf{Q:L}, \mathsf{Q:+}, \mathsf{Q:-}\}$ are mutually independent unit-rate Poisson processes. We assume that $Q(0)$, $B(0)$, $R_\mathsf{S}(0)$, and $R_\mathsf{L}(0)$ are known initial conditions.

\paragraph{Model interpretability via Poisson thinning and superposition} 
Because the queueing network model is composed from a collection of Poisson processes, we can leverage well-known probabilistic properties to further interpret its transitions.
The transition rates can equivalently be interpreted through standard Poisson thinning: abandoning callers enter the lost, short-redial, and long-redial paths in proportions $\theta_A/\theta$, $\theta_S/\theta$, and $\theta_L/\theta$, while completed calls succeed with probability $\mu_+/\mu$ and otherwise enter the reconnect orbit. See \Cref{apx-addl-discussion} for more explanation.

\subsection{Fluid model}

As a simpler companion to the stochastic queueing model, let us also define the deterministic \emph{fluid model} $(q,b,r_\mathsf{S}, r_\mathsf{L})$ through the following system of ordinary differential equations (ODEs):
\begin{align}
&
\dot{q}(t)
\label{qDef}
=
\lambda
+
\delta_\mathsf{B} b(t)
+
\delta_\mathsf{S} r_\mathsf{S}(t)
+
\delta_\mathsf{L} r_\mathsf{L}(t)
-
\theta (q(t) - c)^+
-
\mu (q(t) \wedge c)
,
&&
\dot{b}(t)
=
\mu_\mathsf{+} (q(t) \wedge c)
-
(\delta_\mathsf{B} + \gamma) b(t)
,
\\
&
\dot{r}_\mathsf{L}(t)
\label{rlDef}
=
\mu_\mathsf{-} (q(t) \wedge c)
+
\theta_\mathsf{L} (q(t) - c)^+
-
\delta_\mathsf{L} r_\mathsf{L}(t)
,
&&
\dot{r}_\mathsf{S}(t)
=
\theta_\mathsf{S} (q(t) - c)^+
-
\delta_\mathsf{S} r_\mathsf{S}(t)
.
\end{align}
As one may recognize by comparison to Figure~\ref{queueFig}, the dynamics of these ODEs align with the overall flow of the queueing model. However, the fluid model is deterministic, unlike the stochastic queueing model, suggesting that the fluid model could be a good first-order approximation for the queue. There is a clean intuition for that approximation: Imagine the call center queue as a physical system, where callers actually arrive to process through the network as visualized in Figure~\ref{queueFig}. Now, imagine those customers as arriving faster and faster, but also proportionally shrinking smaller and smaller, so that they move through the various services and orbits faster and faster as well. As this shrinking and speeding goes to the extreme, the queueing network would look like a continuous flow of callers in and out of the stations of the queue, where the random times between arrivals, abandonments, completed calls, re-dials, re-connects, and re-enrolls would be replaced by fluid at the same rates. Moreover, following intuition granted by the law of large numbers, the stochastic fluctuations brought by those random arrival, abandonment, and service times should eventually be dominated by the sheer scale of the growing and accelerating system.

This intuition can be formalized: the fluid model indeed arises out of a functional strong law of large numbers (FSLLN) limit of the queueing model, and we now formalize this connection in Proposition~\ref{fluidLimit}.

\begin{proposition}\label{fluidLimit}
For $n \geq 0$, let $\left(Q^{(n)}, B^{(n)}, R_\mathsf{S}^{(n)}, R_\mathsf{L}^{(n)}\right)$ be the queueing network model of~\eqref{QDef} through~\eqref{RLDef} alternately defined with external arrival rate $\lambda n$ and number of servers $c n$. Then, given that
\begin{align}
\lim_{n \to \infty}
\left(\frac{1}{n}Q^{(n)}(0), \frac{1}{n}B^{(n)}(0), \frac{1}{n}R_\mathsf{S}^{(n)}(0), \frac{1}{n}R_\mathsf{L}^{(n)}(0)\right)
\longrightarrow
\left(q^0, b^0, r_\mathsf{S}^0, r_\mathsf{L}^0\right)
,
\end{align}
the fluid limit of the scaled stochastic process converges to the deterministic system,
\begin{align}
\lim_{n \to \infty}
\left(\frac{1}{n}Q^{(n)}(t), \frac{1}{n}B^{(n)}(t), \frac{1}{n}R_\mathsf{S}^{(n)}(t), \frac{1}{n}R_\mathsf{L}^{(n)}(t)\right)
\stackrel{\mathsf{a.s.}}{\longrightarrow}
\left(q(t), b(t), r_\mathsf{S}(t), r_\mathsf{L}(t)\right)
,
\end{align}
uniformly on compact sets, where $\left(q(t), b(t), r_\mathsf{S}(t), r_\mathsf{L}(t)\right)$ is the unique solution to the system of equations in~\eqref{qDef} through~\eqref{rlDef} with initial condition $\left(q(0), b(0), r_\mathsf{S}(0), r_\mathsf{L}(0)\right) = \left(q^0, b^0, r_\mathsf{S}^0, r_\mathsf{L}^0\right)$.
\end{proposition}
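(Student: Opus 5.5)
We follow the standard martingale/FSLLN route of \citet{mandelbaum1998strong} and \citet{pang2007martingale}. Write $X^{(n)} = (Q^{(n)}, B^{(n)}, R_\mathsf{S}^{(n)}, R_\mathsf{L}^{(n)})/n$. Divide each of \eqref{QDef}--\eqref{RLDef}, written with $\lambda n$ and $cn$ servers, by $n$. Every rate function is homogeneous of degree one: $(nx - cn)^+ = n(x-c)^+$, $nx \wedge cn = n(x\wedge c)$, and $\delta_\mathsf{B} n b = n\delta_\mathsf{B} b$. Hence every Poisson term has the form $\frac1n \Pi_i\bigl(n\int_0^t f_i(X^{(n)}(s))\,\mathrm{d}s\bigr)$, where each $f_i$ is globally Lipschitz on $\mathbb{R}_+^4$. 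We then centre each such term as
\[
\tfrac1n\Pi_i(n u) = u + \tfrac1n\bigl(\Pi_i(nu) - nu\bigr),
\]
so that $X^{(n)}(t) = X^{(n)}(0) + \int_0^t F(X^{(n)}(s))\,\mathrm{d}s + M^{(n)}(t)$. Here $F$ is the vector field of \eqref{qDef}--\eqref{rlDef}, and $M^{(n)}$ collects the centred terms.

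\textbf{Step 1: a priori bound.} We need the arguments $n\int_0^t f_i$ to stay in a compact set $[0,nK]$ on $[0,T]$. The total population $Q+B+R_\mathsf{S}+R_\mathsf{L}$ increases only through fresh arrivals. So $\|X^{(n)}(t)\|_1 \le \|X^{(n)}(0)\|_1 + \frac1n\Pi_\lambda(n\lambda T)$, and this is a.s. bounded uniformly in $n$ by the convergent initial conditions and the ordinary SLLN. It follows that $\int_0^T f_i(X^{(n)})\,\mathrm{d}s \le L(1+\sup_n\|X^{(n)}\|_\infty)T =: K$ a.s.

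\textbf{Step 2: FSLLN for Poisson processes.} For a unit-rate Poisson process, $\sup_{0\le u\le K} |\frac1n \Pi_i(nu) - u| \to 0$ a.s. This is the functional SLLN, which follows from the SLLN plus monotonicity of $\Pi_i$. Since there are finitely many indices $i\in\mathcal{I}$, this gives $\sup_{t\le T}\|M^{(n)}(t)\|\to 0$ a.s. on a common probability-one event.

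\textbf{Step 3: Gronwall.} The ODE \eqref{qDef}--\eqref{rlDef} has a globally Lipschitz right-hand side $F$, with constant $L$ built from $\lambda,\mu_\pm,\theta$'s, $\delta$'s and $\gamma$. Picard--Lindelöf therefore gives existence and uniqueness of a global solution $x$ with $x(0) = (q^0,b^0,r_\mathsf{S}^0,r_\mathsf{L}^0)$. Subtracting the integral equations yields
\[
\|X^{(n)}(t) - x(t)\| \le \|X^{(n)}(0)-x(0)\| + \sup_{s\le T}\|M^{(n)}(s)\| + L\int_0^t \|X^{(n)}(s)-x(s)\|\,\mathrm{d}s.
\]
Gronwall's inequality then gives $\sup_{t\le T}\|X^{(n)}(t)-x(t)\| \le \bigl(\|X^{(n)}(0)-x(0)\| + \sup_{t \le T}\|M^{(n)}(t)\|\bigr)e^{LT} \to 0$ a.s. Equivalently, the map from (initial condition, noise path) to solution is continuous in the uniform topology, which is the continuous-mapping formulation of \citet{pang2007martingale}.

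The main obstacle is Step 1 combined with the random time change. The FSLLN in Step 2 only gives uniform convergence over a compact time-change window, so the window must be controlled pathwise and almost surely, not merely in probability. The conservation/domination argument through $\Pi_\lambda$ is what makes this work. We would also double-check the bookkeeping in \eqref{QDef}: the departure term $\Pi_\mathsf{B}$ shares its process with the re-certification flow, while the $\gamma B$ exit appears in the fluid $b$-equation. The stochastic $B$ equation should therefore include an extra $\gamma$-driven Poisson term, which we would add, so that its limit matches $\dot b$.
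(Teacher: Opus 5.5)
Your proof is correct and is essentially what the paper's one-line proof invokes. The paper cites Theorem 2.2 of \citet{mandelbaum1998strong}, whose argument is this same random time-change representation with Lipschitz, degree-one homogeneous rates, an a.s.\ FSLLN for the unit-rate Poisson primitives on a pathwise-bounded time window, and Gronwall; your conservation bound through $\Pi_\lambda$ is a model-specific substitute for the linear-growth control the general theorem relies on. Your bookkeeping remark is also right and worth keeping: as literally written, \eqref{BDef} and the index set $\mathcal{I}$ contain no $\gamma$-driven departure process, so the limit would satisfy $\dot b = \mu_\mathsf{+}(q\wedge c) - \delta_\mathsf{B} b$ rather than the paper's $b$-equation, and the statement holds only once that Poisson term is added (the omission is in \eqref{BDef}, not \eqref{QDef}).
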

\begin{proof}
This fluid limit can be readily obtained from prior functional strong law results from the literature; for instance, the proof of this limit follows immediately from theorem 2.2 of \citet{mandelbaum1998strong}.
result. \rev{Alternatively, one can also obtains this fluid limit via the standard ``recipe'' of \citet{ethier2009markov}, as \citet{ding2015fluid} does excellently. 
}
\end{proof}

From this combination of intuition and rigorous connection, fluid models are very commonly used to study queueing models in a simplified manner that grants tractability but still captures the heart of the modeling context.\footnote{Critically, let us note that the solutions to the fluid model equations need not exactly match the underlying means of the stochastic model, especially for queues with abandonment \citep[e.g.,][]{daw2019new}. Nevertheless, for large systems with many arrivals and many servers (as in our motivating context), the stochastic model is typically close enough to the fluid approximation for its first-order insights to be both valid and valuable, as implied by Proposition~\ref{fluidLimit}.} 
These benefits are particularly evident when studying the steady state of the system, in which the system of ODEs simplify to a system of linear equations.
By setting the equations in~\eqref{qDef} and~\eqref{rlDef} each to 0 and solving for an equilibrium solution, which we denote $(\bar q, \bar b, \bar r_\mathsf{S}, \bar r_\mathsf{L})$, this fluid model admits a first-order approximation for the system's steady-state means:
\begin{align}
\bar q
&=
\left(
c \wedge \left(1 + \frac{\delta_\mathsf{B}}{\gamma}\right)\frac{\lambda}{\mu_\mathsf{+}}
\right)
+
\frac{1}{\theta_\mathsf{A}}
\left(
\lambda - \frac{\gamma c \mu_{\mathsf{+}}}{\gamma + \delta_\mathsf{B}} 
\right)^+
,
\qquad
&&
\bar b 
=
\left(
\frac{\lambda}{\gamma}
\wedge
\frac{c\mu_{\mathsf{+}}}{\gamma + \delta_\mathsf{B}}
\right)
,
\\
\bar r_\mathsf{L}
&=
\left(
\frac{c\mu_{\mathsf{-}}}{\delta_\mathsf{L}}
\wedge
\left(
1 + \frac{\delta_\mathsf{B}}{\gamma}
\right)
\frac{\lambda \mu_{\mathsf{-}}}{\delta_\mathsf{L}\mu_{\mathsf{+}}}
\right)
+
\frac{\theta_\mathsf{L}}{\delta_\mathsf{L} \theta_\mathsf{A}}
\left(
\lambda - \frac{\gamma c \mu_{\mathsf{+}}}{\gamma + \delta_\mathsf{B}} 
\right)^+
,
\qquad
&&
\bar r_\mathsf{S}
=
\frac{\theta_\mathsf{S}}{\delta_\mathsf{S} \theta_\mathsf{A}}
\left(
\lambda - \frac{\gamma c \mu_{\mathsf{+}}}{\gamma + \delta_\mathsf{B}} 
\right)^+
.
\end{align}

\paragraph{Distinguishing arrival rates in the fluid model} By comparison to the \emph{fresh} arrival rate, $\lambda$, which only counts callers that are newly seeking benefits, the overall rate of arrivals to the system will also include re-dialing callers and current benefit recipients seeking to re-enroll. Because the total arrival rate is both an important quantity in its own right and more readily observed in data, let us formalize it in notation. Specifically, in the fluid model, let 
$\hat \lambda(t) 
= 
\lambda
+
\delta_\mathsf{B} b(t)
+
\delta_\mathsf{S} r_\mathsf{S}(t)
+
\delta_\mathsf{L} r_\mathsf{L}(t)$ 
denote the \rev{\emph{total arrival rate}} to the system at time $t$, and, without the $t$ argument, let 
$\hat \lambda
= 
\lambda
+
\delta_\mathsf{B} \bar{b}
+
\delta_\mathsf{S} \bar{r}_\mathsf{S}
+
\delta_\mathsf{L} \bar{r}_\mathsf{L}$ 
be the total arrival rate for the fluid model in steady-state. 

Furthermore, in studying the caller experience, it will also be of interest to focus on the volume of arrivals that includes the re-dialing and re-enrolling orbits but excludes the customers that renege from the queue before connecting to an agent. To be able to compute values such as the average waiting time among callers that do not abandon, let us define $\tilde{\lambda}(t) = \hat\lambda(t) - \theta(q(t) - c)^+$ as the \rev{\emph{effective arrival rate}, i.e. the total arrival rate at time $t$ less abandonments}, and we again let the argument-less $\tilde \lambda$ denote \rev{the effective arrival rate in steady-state}.

\paragraph{Simplifications in the overloaded regime.} When the system is overloaded, i.e. $q(t) \geq c$, then we can simplify the branching logic so that $(q(t) - c)^+ = q(t) -c$ and $(q(t) \wedge c)=c$. Moreover, in  steady-state, the overloaded regime can be easily characterized, and it yields simple expressions for the four model quantities.



\begin{corollary}\label{overCor}
The overloaded regime, $\bar{q} \geq c$, occurs as the equilibrium fluid model solution if and only if $\lambda \geq {\gamma c \mu_{\mathsf{+}}}/({\gamma + \delta_\mathsf{B}})$, which equivalently occurs if and only if $\hat{\lambda} \geq c\mu$. In this case, the steady-state solutions simplify to
\begin{align}
\bar{q}
&= \frac{1}{\theta_{\mathsf{A}}}
\left(\lambda - \frac{\gamma c\mu_{\mathsf{+}} }{\gamma + \delta_{\mathsf{B}}}\right) 
+ c, \qquad 
\bar{b}
= \frac{c\,\mu_{\mathsf{+}}}{\gamma + \delta_{\mathsf{B}}}, 
\qquad 
\bar{r}_S = \frac{\theta_{\mathsf{S}}}{\theta_{\mathsf{A}} \delta_{\mathsf{S}}}
\left(\lambda - \frac{\gamma c\mu_{\mathsf{+}} }{\gamma + \delta_{\mathsf{B}}}\right) ,
\qquad 
\bar{r}_L= \frac{c\mu_{\mathsf{-}}}{\delta_{\mathsf{L}}}
+ \frac{\theta_{\mathsf{L}}}{\delta_{\mathsf{L}} \theta_{\mathsf{A}}}
\left(\lambda - \frac{\gamma c\mu_{\mathsf{+}} }{\gamma + \delta_{\mathsf{B}}}\right) .\label{eqn-steady-state-simplified}
\end{align}
\end{corollary}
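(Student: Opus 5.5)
We work directly with the equilibrium equations obtained by setting the right-hand sides of \eqref{qDef} and \eqref{rlDef} to zero, writing $x=(\bar q-c)^+$ and $y=\bar q\wedge c$. These equations are linear in $(\bar b,\bar r_\mathsf{S},\bar r_\mathsf{L})$ once $x,y$ are fixed. They give
\[
\bar b=\frac{\mu_+ y}{\gamma+\delta_\mathsf{B}},\qquad
\bar r_\mathsf{S}=\frac{\theta_\mathsf{S}x}{\delta_\mathsf{S}},\qquad
\bar r_\mathsf{L}=\frac{\mu_- y+\theta_\mathsf{L}x}{\delta_\mathsf{L}}.
\]
Substituting these into $\dot q=0$, and using $\mu=\mu_++\mu_-$ and $\theta=\theta_\mathsf{A}+\theta_\mathsf{S}+\theta_\mathsf{L}$, the re-dial and re-connect contributions cancel against the matching parts of the outflow. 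What remains is the single balance equation
\[
\lambda=\theta_\mathsf{A}x+\frac{\gamma\mu_+}{\gamma+\delta_\mathsf{B}}\,y .
\]
The first key step is to verify this cancellation. It reflects the fact that only the true exits ($\theta_\mathsf{A}$ abandonment and $\gamma$ attrition) remove mass from the network.

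For the equivalence with the threshold on $\lambda$, we use that the right-hand side is nondecreasing in $\bar q$. It is strictly increasing when $\theta_\mathsf{A}>0$, which the paper's closed form for $\bar q$ presupposes. At $\bar q=c$ we have $x=0$ and $y=c$, so the right-hand side equals $\gamma c\mu_+/(\gamma+\delta_\mathsf{B})$. Monotonicity then gives $\bar q\ge c$ if and only if $\lambda\ge \gamma c\mu_+/(\gamma+\delta_\mathsf{B})$. This agrees with the general equilibrium formula displayed before the corollary.

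In the overloaded case we set $y=c$ and $x=\bar q-c$, and solve the balance equation for $\bar q$. Plugging this into the three orbit expressions yields \eqref{eqn-steady-state-simplified} directly.

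The second equivalence, with $\hat\lambda\ge c\mu$, is the step that needs a little care. Our plan is to compute $\hat\lambda$ from its definition and the equilibrium expressions. We also note that $\dot q=0$ can be rewritten as $\hat\lambda=\theta x+\mu y$. If $\bar q\ge c$, then $\hat\lambda=\theta(\bar q-c)+c\mu\ge c\mu$. Conversely, if $\bar q<c$, then $x=0$ and $\hat\lambda=\mu\bar q<c\mu$.

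Hence $\hat\lambda\ge c\mu$ if and only if $\bar q\ge c$, and this closes the chain of equivalences. The only subtlety is the boundary case $\bar q=c$, which lies on both sides of the branching. There the two formulas coincide, so no ambiguity arises.
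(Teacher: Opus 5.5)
Your proof is correct, and it is more self-contained than the paper's treatment. The paper gives no separate argument for this corollary. It reads the overloaded case off the general equilibrium formula displayed just before it, which is itself only described as obtained by setting the ODEs to zero and solving. In that reading, the threshold on $\lambda$ is where $c \wedge (1+\delta_\mathsf{B}/\gamma)\lambda/\mu_+$ switches branches and the positive part $(\lambda - \gamma c\mu_+/(\gamma+\delta_\mathsf{B}))^+$ turns on. The $\hat\lambda \ge c\mu$ equivalence is never argued. The appendix only computes $\hat\lambda = c\mu + (\theta/\theta_\mathsf{A})(\lambda - \gamma c\mu_+/(\gamma+\delta_\mathsf{B}))$ in the overloaded case by substituting these closed forms, which gives just one direction.

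Your reduction to the scalar flow balance $\lambda = \theta_\mathsf{A}x + \gamma\mu_+ y/(\gamma+\delta_\mathsf{B})$ supplies the omitted derivation of the general formula. It turns existence, uniqueness and the threshold into a monotonicity statement with a clear interpretation: fresh inflow is absorbed only by the two true exits. Your rewriting of $\dot q = 0$ as $\hat\lambda = \theta x + \mu y$ then settles both directions of the $\hat\lambda$ equivalence at once.

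One minor point of attribution. Strict increase of the right-hand side on $[0,c]$ comes from $\gamma\mu_+>0$. The hypothesis $\theta_\mathsf{A}>0$ gives strictness on $[c,\infty)$, that is, uniqueness of $\bar q$ in the overloaded case. The threshold equivalence needs only the former, and even that can be dropped given $\lambda>0$: if $\bar q<c$ then $\lambda = \gamma\mu_+\bar q/(\gamma+\delta_\mathsf{B})$, which is either strictly below $\gamma c\mu_+/(\gamma+\delta_\mathsf{B})$ or equal to zero.
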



In the remainder of the paper, we will largely focus on the steady-state fluid model and assume it to be in the overloaded regime. This parsimonious approximation offers the simplest first step to derive insights from the \emph{Holmes v. Knodell} data and assess the impacts of possible changes to the SNAP call center system. 
The overloaded assumption is well-justified, here and more broadly, since system-level stresses motivate us to study the problem in the first place.
 
\section{Data and Model Fitting}

In this section, we illustrate how to connect our model parameters to call center data in general, and we discuss how we do so for Missouri DSS' call centers via the aggregated data released in court documents. 


\textbf{Fitting model parameters in general: }

\textit{Simple parameter matching from observable metrics: }The eleven parameters used by the queueing model
may not actually all be readily identifiable in practice. For instance, the model uses $\lambda$, the rate of \emph{fresh} arrivals only, as a parameter, whereas $\hat \lambda$, the total rate of all arrivals, is likely what is recorded operationally either by hand or by typical call center management software. Rather, disentangling $\lambda$ from $\hat \lambda$ is actually a possible benefit of the model. 

Let us consider what quantities should be observable in practice and connect them to the (overloaded) steady-state fluid approximation.\footnote{If fine-grained data on arrival times and call durations is available, more rigorous techniques like maximum likelihood estimation can deliver high fidelity parameter estimation. But since the fluid model is itself already a deterministic approximation to the queueing model's stochastic approximation of reality, these back-of-the-envelope calculations can be used to guide calibration of the fluid model from aggregated data.} 
Several inputs are directly observable or administratively specified: total arrivals $\hat{\lambda}$, staffing $c$, average handling time $1/\mu$\footnote{This service time \textit{excludes} waiting and its calculation \textit{excludes} any callers who abandon before being connected to an agent.}, the re-certification interval $1/\delta_\mathsf{B}$, $\bar{q}$ and $\bar{b}$, the average number of present callers (including both waiting and in service), and benefits retention, which pins down $\gamma$ through $\delta_\mathsf{B}/(\gamma+\delta_\mathsf{B})$. 

Assuming that the system is in the overloaded regime, we can take the above assumptions of observable quantities within the equilibrium fluid model and the definitions of $\hat \lambda$, $\theta$, and $\mu$ to provide a system of equations for the remaining model quantities. We do so now, and we underline the unknown terms in each equation for emphasis:
\begin{align*}
\begin{alignedat}{4}
0&=\underline{\lambda}+\delta_\mathsf{B}\bar b
+\underline{\delta_\mathsf{S}}\underline{\bar r_\mathsf{S}}
+\underline{\delta_\mathsf{L}}\underline{\bar r_\mathsf{L}}
-\theta(\bar q-c)-c\mu,
&\qquad
0&=c\underline{\mu_+}-(\gamma+\delta_\mathsf{B})\bar b,
&\qquad
0&=\underline{\theta_\mathsf{S}}(\bar q-c)
-\underline{\delta_\mathsf{S}}\underline{\bar r_\mathsf{S}},
&\qquad
0&=c\underline{\mu_-}
+\underline{\theta_\mathsf{L}}(\bar q-c)
-\underline{\delta_\mathsf{L}}\underline{\bar r_\mathsf{L}},
\\
\hat\lambda
&=\underline{\lambda}+\delta_\mathsf{B}\bar b
+\underline{\delta_\mathsf{S}}\underline{\bar r_\mathsf{S}}
+\underline{\delta_\mathsf{L}}\underline{\bar r_\mathsf{L}},
&\qquad
\theta&=\underline{\theta_\mathsf{A}}
+\underline{\theta_\mathsf{L}}
+\underline{\theta_\mathsf{S}},
&\qquad
\mu&=\underline{\mu_-}+\underline{\mu_+}.
\end{alignedat}
\end{align*}
As written, this system has seven equations and ten unknowns. Hence, we need more information to solve for the missing values, albeit perhaps not as much as it might seem. First, consider $\delta_\mathsf{S}$, $\delta_\mathsf{L}$, $\bar{r}_\mathsf{S}$, and $\bar{r}_\mathsf{L}$. Several approaches can quickly reduce these four to just two missing values. For instance, one can estimate $\delta_\mathsf{S}$ and $\delta_\mathsf{L}$ from time-series data, or work directly with the \emph{volume} of returning callers per unit of time, say $\bar{v}_\mathsf{S} = \delta_\mathsf{S} \bar{r}_\mathsf{S}$ and $\bar{v}_\mathsf{L} = \delta_\mathsf{L} \bar{r}_\mathsf{L}$, which preserves our ability to evaluate key performance metrics. 
Taking such reduction for granted, we are left with seven linear equations and eight unknowns. We can 
pin down this final degree of freedom 
in one of three ways: use domain knowledge to fix the short- versus long-redial split, simplify to a single redial orbit ($\theta_\mathsf{S}=0$), or sweep over plausible abandonment/redial parameters and compare the implied metrics to auxiliary data. We take the last approach in our analysis.

\textbf{Fitting Parameters to the Holmes vs. Knodell Data:}

\rev{\textit{Dataset description: }
We downloaded the Holmes vs. Knodell court documents from the PACER database, which include extensive reports (typically monthly, weekly or daily). 
We primarily analyze data from ``Exhibit 87'''s daily aggregated reports from four different call centers, varying in size, spanning 9/29/2021-12/29/2023, though not all call centers report information for the whole time period. (Some of these call centers are centralized CSCs that handle calls for other DSS programs such as Medicaid.) 
We observe information about: call volume (number of incoming calls, calls answered successfully, calls abandoned); wait time in queue (before successful connection or abandonment, ASA [average speed to answer] before successful connection, AHT [average handling time] spent by agent on a call's duration); and staffing.
Average daily call volume per call center is around $(700, 2500, 3100, 6800)$ with average abandonment fractions $(2.5\%, 58\%, 57\%, 57\%)$. The large call volume justifies our fluid approximation. Given the heterogeneity in call center volumes and staffing, we later fit parameters separately to each call center. See \Cref{apx-callcenterdata} for more details.}

\textit{Parametrizing the Queueing Model from Aggregated Call Center Data: }We now discuss how we fit parameters specifically for the aggregated call center and arrival data for court documents from \textit{Holmes v. Knodell}, see \Cref{apx-callcenterdata} for more detail. 
We parametrize everything in terms of the number of minutes a call center is open in a day, $\rev{MMin}=540 \text{min}=9 \text{ hours}$; the call centers are open on weekdays. In our model, some \textit{operational} parameters such as the total arrival rate $\hat{\lambda}$, staffing levels $c$, and average handling time (AHT) are directly observed in the call-center data. 
To reduce parameter dimensionality, we fix a proportion $p_+\in[0,1]$ of completed calls that result in completed re/-enrollments, hence $\mu_{\mathsf{+}} = p_+ \left( \frac{\rev{MMin}}{AHT} \right).$ 
We set $\delta_{\mathsf{B}},\delta_{\mathsf{S}},\delta_{\mathsf{L}},p_+$ by using a combination of program design, domain knowledge, and time series analysis. \rev{\textit{Program design} determines $\delta_{\mathsf{B}} = 1/128.5$ since per SNAP eligibility guidelines, the recertification period is 6 calendar months $\approx$ 128.5 model (call center) days. 
\textit{Domain knowledge} sets $\delta_{\mathsf{S}}=3$, corresponding to a typical callback time of $3$ hours; this is a behavioral assumption about within-day callback behavior.} 
The remaining behavioral parameters governing abandonment and re-dial behavior, $(\lambda,\theta_{\mathsf{A}},\theta_{\mathsf{S}},\theta_{\mathsf{L}})$, namely the fresh arrival rate $\lambda$ and rates of total abandonment $\theta_A$, short/long orbit abandonment $\theta_S,\theta_L$, are not directly observed in our aggregated data.\footnote{From data, we only have measures of aggregated abandonment percentages (typically daily or weekly).}


\textit{Derived metrics and an operational dashboard: }The steady-state approximation allows us to derive \textit{model-based estimates} of performance metrics. The model results best inform \textit{relative comparisons} as to how changes in call center design \textit{increase or decrease key metrics}. We introduce the resulting \textit{model-based performance metrics}, which use the approximated steady state values and include: average wait time  ($\bar{w} = (\bar{q}-c)^+ / \hat \lambda$), which includes abandonments),  average speed to answer ($\tilde{w} = (\bar{q}-c)^+ /\tilde \lambda$), which does not include abandonments), abandonment fraction ($1 - \frac{\tilde{\lambda}}{\hat\lambda}$), total lost abandonment ($\theta_{\mathsf{A}} (\bar{q}-c)^+$), and endogenous congestion from re-dials ($ \delta_{\mathsf{S}} \bar r_{\mathsf{S}} + \delta_{\mathsf{L}} \bar r_{\mathsf{L}}$). 


We can interpret the metric of lost calls, $\theta_{\mathsf{A}} (\bar{q}-c)^+,$ those who completely leave the system, as a metric related to procedural denials. The fluid model is not sharp enough to estimate whether a single caller would get through in $30$ days, so \rev{our total abandonment metric is a policy-relevant lower bound on procedural denials. Of the other quantities, wait times directly affect customer experience and are key metrics in essentially any call center.}
\rev{Other relevant values include utilization $U = {\tilde{\lambda}}/{c \mu} ,$ which measures the ratio of the \emph{effective} arrival rate to service capacity and also captures the average fraction of service providers that are busy, and the 
implied utilization
    $IU  ={\hat\lambda}/{c\mu}$, which is a common adaptation of the utilization metric to \emph{all arrivals} that also measures overload (i.e., $IU$ can exceed 1 whereas $U \leq 1$).\footnote{\rev{By consequence of Corollary~\ref{overCor}, we can observe that, in the overloaded regime, $U=1$ and $IU \geq 1$.}}}

We develop an analytics dashboard where stakeholders could perturb system parameters to examine impacts on these performance metrics, see \Cref{fig:dashboard} in the Appendix for a screenshot.\footnote{The (anonymized version of) the dashboard is publicly available at \url{https://call-center-simulator-production.up.railway.app}.}

\textit{Robustly fitting $\lambda$ over a range of plausible $\theta_{\mathsf{A}}, \theta_{\mathsf{S}}, \theta_{\mathsf{L}}$: } 
Since the finer-grained behavioral parameters $\theta_{\mathsf{A}}, \theta_{\mathsf{S}}, \theta_{\mathsf{L}}$ are uninformed from data, we fit a single fresh arrival rate parameter $\lambda$ that recovers wait times well over a \textit{potential range of plausible values of $\theta_{\mathsf{A}}, \theta_{\mathsf{S}}, \theta_{\mathsf{L}}$}. We assess how the derived wait times $\bar{w},\tilde{w}$ compare to \textit{observed} wait times (including abandonment) and speed to answer (excluding abandonment). 
There are four different call centers in Exhibit 87. We group them by call volume and average handling time, two in a 20 minute AHT category and two in a 40 minute AHT category. We calculate the median average wait time and median average speed to answer. 
We sweep over a range of \textit{reasonable} $\theta_{\mathsf{A}},\theta_{\mathsf{S}},\theta_{\mathsf{L}}$; we fix the sum of rates $\theta_{\mathsf{A}}+\theta_{\mathsf{S}}+\theta_{\mathsf{L}}=10$ or $54$ (corresponding to $54$ or $10$ minute abandonment, respectively) and vary the relative weight of $\theta_{\mathsf{A}},\theta_{\mathsf{S}},\theta_{\mathsf{L}}$. Finally, we robustly fit $\lambda$ by minimizing the mean average deviation (averaging over $\theta$ parameter vector values) of median wait times.


  \begin{table}[t!]
  \centering
    \caption{Descriptive statistics on observed metrics for call centers (CC), and
  best-fit lambda values and MAD for each call center. Staffing and arriving
  calls are daily metrics.}\label{tab:parameter-fitting}

  \small
  \setlength{\tabcolsep}{4pt}
  \begin{tabular}{@{}cccccccc@{}}
  \toprule
  CC & c (staffing) & AHT (min) & Wait (min) & ASA (min) & observed $
  \hat{\lambda}$ & best-fit $\lambda$ & Best MAD ($\bar{w}$, $\tilde{w}$) \\
  \midrule
  1 & 32.5 & 20.8 & 4.4 & 4.1 & 228.0 & 145.4 & (1.2, 1.9) \\
  2 & 52.0 & 21.9 & 38.8 & 117.8 & 2931.5 & 623.3 & (2.9, 27.0) \\
  3 & 72.0 & 44.2 & 45.5 & 87.6 & 2396.5 & 700.1 & (11.6, 20.8) \\
  4 & 51.0 & 44.8 & 1.8 & 1.8 & 712.0 & 164.8 & (1.1, 1.6) \\
  \bottomrule
  \end{tabular}

  \vspace{-0.1in}
  \end{table}

\textit{Observed metrics and fitted parameters: }This table summarizes the key metrics for the four call centers, separating the empirically observed data from the fitted model parameters. The columns from ``c (staffing)'' through ``observed $\hat{\lambda}$'' represent the direct operational measurements. All time metrics are shown in decimal minutes. Specifically, ``Average Wait Time (min)'' and ``ASA (min)" correspond to the observed values for $\bar{w}$ and $\tilde{w}$, respectively, while ``observed $\hat{\lambda}$'' is the total arrival rate including re-dials. In contrast, the ``best-fit $\lambda$'' is a fitted parameter, representing the optimized robust rate of fresh arrivals derived from the model. The final column, ``Best MAD ($\bar{w}$, $\tilde{w}$)'', quantifies model fit with the Mean Absolute Deviation between model predictions and the ``Average Wait Time'' and ``ASA Time'', respectively.

\section{Evaluating system design changes}


\paragraph{Analytical results for operational changes in the general fluid model}

We begin by leveraging closed-form solutions for steady-state expressions for the performance metrics to analyze \textit{how} different potential operational changes affect system performance in the overloaded regime. Specifically, we now analyze six key steady-state performance metrics: the procedural denial rate ($\mathrm{PD} = \theta_A (\bar q - c)$), the mean number of waiting callers ($\bar{q} - c$), the mean waiting time ($\bar w = (\bar q - c) / \hat \lambda$), the average speed to answer ($\tilde w = (\bar q - c) / \tilde \lambda$), the endogenous congestion from re-dials ($\mathrm{EC}_\mathsf{R} = \delta_\mathsf{S} \bar r_\mathsf{S} + \delta_\mathsf{L} \bar r_\mathsf{L}$), and the endogenous congestion from re-certification ($\mathrm{EC}_\mathsf{B} = \delta_\mathsf{B} \bar b$).   For each quantity, we evaluate its dependence on four model parameters that could reasonably be subject to managerial control or operational design: the staffing level ($c$), the frequency of re-certification ($\delta_\mathsf{B}$), the individual service rate ($\mu$), and the fraction of successful calls (which we denote $p_+ = \mu_+ / \mu$, with $\mu$ assumed to be held fixed).\footnote{Though endogenous congestion from re-dials is undoubtedly bad for the system, the service agents, and the callers themselves, endogenous congestion from re-certifications might actually be considered at least somewhat good: it is a sign that people are successfully receiving SNAP benefits.} 

Intuitively, while $c$ may be expensive to increase, it may be relatively easy to re-design the eligibility interview process to be either faster (increase $\mu$) or more likely to result in certification (increase $p_+$), and, likewise, structural policy changes could increase the re-certification period (decrease $\delta_\mathsf{B}$). Here, we leverage the fluid model to offer insight of how these prospective changes may relate to one another.
In the interest of space, the exact expressions of the partial derivatives of each performance metric with respect to each operational parameter are left for the appendix (see Propositions~\ref{dPDprop} through~\ref{dECBprop}), and here we simply highlight their takeaways in a series of corollary results. 

First, we establish that any of the prospective changes will yield intuitive directional improvements in the procedural denial rate, the waiting time, the average speed to answer, and the number waiting. 






\begin{corollary}[Directional comparative statics]
In the overloaded regime, the procedural denial rate, mean waiting time, average speed to answer, and mean number waiting are each decreasing in $c$, $\mu$, and $p_+$ and increasing in $\delta_\mathsf{B}$. Endogenous congestion from re-certifications is increasing in each operational parameter. Endogenous congestion from re-dials is increasing in $\delta_\mathsf{B}$ and decreasing in $p_+$; its dependence on $c$ and $\mu$ is ambiguous, increasing with $c$ and $\mu$ if $\theta_\mathsf{A}(1-p_+)(\gamma+\delta_\mathsf{B})>p_+\gamma c(\theta_\mathsf{S}+\theta_\mathsf{L})$, and otherwise non-increasing in $c$ and $\mu$.
\end{corollary}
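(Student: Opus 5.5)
The plan is to work entirely from the closed-form overloaded steady state in Corollary~\ref{overCor}. I will rewrite each of the six metrics as an explicit function of $(c,\mu,p_+,\delta_\mathsf{B})$, substituting $\mu_\mathsf{+}=p_+\mu$ and $\mu_\mathsf{-}=(1-p_+)\mu$, and then read off the signs of the partial derivatives. The key simplification is to introduce the excess fresh load
\[
X \;=\; \lambda-\frac{\gamma c p_+\mu}{\gamma+\delta_\mathsf{B}} \;\ge 0 ,
\]
which is nonnegative in the overloaded regime by Corollary~\ref{overCor}. Then $W:=\bar q-c = X/\theta_\mathsf{A}$ and $\mathrm{PD}=\theta_\mathsf{A}W = X$. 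The term $\gamma c p_+\mu/(\gamma+\delta_\mathsf{B})$ is strictly increasing in $c$, $\mu$ and $p_+$, and strictly decreasing in $\delta_\mathsf{B}$. Hence $X$, and with it both $\mathrm{PD}$ and the mean number waiting $W$, is decreasing in $c,\mu,p_+$ and increasing in $\delta_\mathsf{B}$. This settles two of the four waiting-type metrics immediately.

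For the two time-based metrics I will avoid expanding $\hat\lambda$ term by term and instead use the steady-state $q$-equation. Setting $\dot q=0$ in the overloaded regime gives $\hat\lambda=\theta W + c\mu$, and therefore $\tilde\lambda=\hat\lambda-\theta W=c\mu$ (consistent with $U=1$). So
\[
\bar w=\frac{W}{\theta W+c\mu},\qquad \tilde w=\frac{W}{c\mu}.
\]
Both expressions are increasing in $W$ for fixed $c\mu$, and both are decreasing in $c\mu$ for fixed $W$. The parameters $c$ and $\mu$ lower $W$ and raise $c\mu$, so both effects point the same way and $\bar w,\tilde w$ decrease in $c$ and $\mu$. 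The parameters $p_+$ and $\delta_\mathsf{B}$ act only through $W$, which gives the claimed directions.

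The endogenous-congestion metrics come next. From Corollary~\ref{overCor}, $\mathrm{EC}_\mathsf{B}=\delta_\mathsf{B}\bar b = c p_+\mu\,\delta_\mathsf{B}/(\gamma+\delta_\mathsf{B})$. This is increasing in $c,\mu,p_+$, and also in $\delta_\mathsf{B}$ because $\delta_\mathsf{B}/(\gamma+\delta_\mathsf{B})$ is increasing. For the re-dials,
\[
\mathrm{EC}_\mathsf{R}=\delta_\mathsf{S}\bar r_\mathsf{S}+\delta_\mathsf{L}\bar r_\mathsf{L}=\frac{\theta_\mathsf{S}+\theta_\mathsf{L}}{\theta_\mathsf{A}}\,X + c\mu(1-p_+).
\]
\begin{itemize}
\item Monotonicity in $\delta_\mathsf{B}$ follows from that of $X$.
\item The $p_+$-derivative is $-c\mu\bigl[\gamma(\theta_\mathsf{S}+\theta_\mathsf{L})/(\theta_\mathsf{A}(\gamma+\delta_\mathsf{B}))+1\bigr]<0$.
\item For $c$ and $\mu$, the derivatives are $\mu\,\Delta$ and $c\,\Delta$ respectively, where $\Delta=(1-p_+)-\gamma p_+(\theta_\mathsf{S}+\theta_\mathsf{L})/(\theta_\mathsf{A}(\gamma+\delta_\mathsf{B}))$.
\end{itemize}
Since $c$ and $\mu$ multiply the same bracket, both partials share a sign, and the dichotomy reduces to $\mathrm{sgn}\,\Delta$. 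Clearing the positive denominator gives $\theta_\mathsf{A}(1-p_+)(\gamma+\delta_\mathsf{B})\gtrless p_+\gamma(\theta_\mathsf{S}+\theta_\mathsf{L})$.

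The step I expect to need the most care is exactly this last threshold. My derivation yields $\Delta$ without a factor of $c$ on the right-hand side, whereas the statement writes $p_+\gamma c(\theta_\mathsf{S}+\theta_\mathsf{L})$. I will recheck whether the intended quantity is the derivative in $\mu$ scaled differently (e.g.\ in terms of the total rate $c\mu$, or a mis-carried factor from $\partial/\partial\mu=c\Delta$) and reconcile the displayed condition with the bracket $\Delta$. I will also handle the boundary case $\Delta=0$, which is what makes the ``otherwise non-increasing'' phrasing non-strict. Finally, I will remark that all arguments remain valid only while $X\ge0$, i.e.\ within the overloaded regime characterized in Corollary~\ref{overCor}.
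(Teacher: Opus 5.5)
Your proposal is correct and follows essentially the paper's route: the appendix also rewrites every metric through $\mathrm{PD}=\lambda-\gamma c\mu p_+/(\gamma+\delta_\mathsf{B})$ (your $X$), uses $\hat\lambda=c\mu+\theta\,\mathrm{PD}/\theta_\mathsf{A}$ and $\tilde\lambda=c\mu$, and reads off the signs of the partial derivatives. Your same-direction argument for $\bar w$ simply replaces the paper's explicit formula $\partial\bar w/\partial c=-\lambda\mu\theta_\mathsf{A}/(c\mu\theta_\mathsf{A}+\theta\,\mathrm{PD})^2$. On the threshold there is nothing to reconcile: the paper's own Proposition~\ref{dECRprop} gives $\partial\mathrm{EC}_\mathsf{R}/\partial c=\mu\Delta$ and $\partial\mathrm{EC}_\mathsf{R}/\partial\mu=c\Delta$ with exactly your $\Delta$, so the $c$ cancels and the factor $c$ in the displayed condition is a typo; the correct criterion is $\theta_\mathsf{A}(1-p_+)(\gamma+\delta_\mathsf{B})>p_+\gamma(\theta_\mathsf{S}+\theta_\mathsf{L})$, and you should state it that way.
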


The main access-oriented metrics improve under candidate operational changes: increasing staffing, increasing service speed, increasing the fraction of successful calls, or lengthening the re-certification period. However, endogenous congestion from redials behaves differently: it decreases with better conversion or shorter re-certification periods, but, interestingly, it could either increase or decrease with higher staffing or faster service. 


To compare magnitudes rather than only signs, we next consider elasticities of the performance metrics with respect to the operational parameters. For a metric $y$ and parameter $x$, the local elasticity is $(x/y)\partial y/\partial x$. First, we find that changes to staffing level, service rate, and probability of success all yield the same respective elasticity for the procedural denial rate, number waiting, and endogenous congestion from re-certifications. For relative magnitudes, the key result is that staffing and service-speed improvements have identical proportional effects in the overloaded regime.

\begin{corollary}[Elasticity equivalences]
For each of PD, $\bar q-c$, and ECB, the elasticity is the same across $c$, $\mu$, and $p_+$. For each of $\bar w$, $\tilde w$, and ECR, the elasticity is the same across $c$ and $\mu$ (but different from the previous elasticity).
\end{corollary}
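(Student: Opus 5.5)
The plan is to show that, in the overloaded regime, each metric depends on the operational parameters only through a single product. The elasticity claims then follow from one chain-rule identity: if $y = f(x_1 x_2 \cdots x_k)$ with $u = x_1\cdots x_k$, then for every $i$,
\[
\frac{x_i}{y}\frac{\partial y}{\partial x_i} = \frac{x_i}{f(u)} f'(u)\frac{u}{x_i} = \frac{u f'(u)}{f(u)},
\]
which is independent of $i$. So everything reduces to rewriting the closed forms of Corollary~\ref{overCor} with $\mu_+ = p_+\mu$ and $\mu_- = (1-p_+)\mu$, holding $\mu$ fixed when varying $p_+$ and $p_+$ fixed when varying $\mu$.

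\textbf{First group: PD, $\bar q - c$, ECB.} Set $u = c\mu p_+ = c\mu_+$ and $K(u) = \gamma u/(\gamma+\delta_\mathsf{B})$. Corollary~\ref{overCor} then gives
\[
\mathrm{PD} = \lambda - K(u), \qquad \bar q - c = \frac{\lambda - K(u)}{\theta_\mathsf{A}}, \qquad \mathrm{EC}_\mathsf{B} = \delta_\mathsf{B}\bar b = \frac{\delta_\mathsf{B}u}{\gamma+\delta_\mathsf{B}}.
\]
Each is a function of $u$ alone, with $\lambda,\gamma,\delta_\mathsf{B},\theta_\mathsf{A}$ fixed. The identity above therefore gives a common elasticity across $c$, $\mu$ and $p_+$. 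Explicitly, this is $-K/(\lambda-K)$ for PD and for $\bar q-c$, and $1$ for ECB.

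\textbf{Second group: $\bar w$, $\tilde w$, ECR.} Here the key step is to simplify $\hat\lambda$ and $\tilde\lambda$. Setting $\dot q = 0$ in~\eqref{qDef} in the overloaded regime gives $\hat\lambda = c\mu + \theta(\bar q - c)$. Hence $\tilde\lambda = \hat\lambda - \theta(\bar q-c) = c\mu$. Next, write ECR using Corollary~\ref{overCor}:
\[
\mathrm{EC}_\mathsf{R} = \delta_\mathsf{S}\bar r_\mathsf{S} + \delta_\mathsf{L}\bar r_\mathsf{L} = c\mu(1-p_+) + \frac{\theta_\mathsf{S}+\theta_\mathsf{L}}{\theta_\mathsf{A}}\left(\lambda - \frac{\gamma\, c\mu\, p_+}{\gamma+\delta_\mathsf{B}}\right).
\]
With $p_+$ fixed, this is a function of $v = c\mu$ only. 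The same is true of $\bar q - c$, of $\hat\lambda = v + \theta(\bar q-c)$ and of $\tilde\lambda = v$. Consequently $\bar w = (\bar q-c)/\hat\lambda$ and $\tilde w = (\bar q-c)/v$ are functions of $v$, and the identity gives equal elasticities in $c$ and $\mu$.

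\textbf{Main obstacle: the parenthetical "different" claim.} In these expressions $p_+$ enters separately from $v$: through $1-p_+$ in ECR, and through $\tilde\lambda = c\mu$, which does not involve $p_+$ at all. For example, the elasticity of $\tilde w$ with respect to $c$ is $-K/(\lambda-K) - 1$, while with respect to $p_+$ it is $-K/(\lambda-K)$. This shows that the $c$-elasticities of $\tilde w$ and $\bar q - c$ differ by exactly $1$. I would exhibit this formula, and the analogous ones for $\bar w$ and ECR, to show that the $c$-elasticities in the second group differ from the first-group elasticity, except possibly on a non-generic parameter set. I would state that "different" is meant in this generic sense.
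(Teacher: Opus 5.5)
Your argument is correct for the equalities, but it takes a different route from the paper. The paper gives no standalone proof. The corollary is read off the explicit partial derivatives in Propositions~\ref{dPDprop}--\ref{dECBprop}, each derived from the expressions of Lemma~\ref{rateLemma} in terms of $\mathrm{PD}$. For instance, $c\,\partial\mathrm{PD}/\partial c=\mu\,\partial\mathrm{PD}/\partial\mu=p_+\,\partial\mathrm{PD}/\partial p_+=-\gamma c\mu p_+/(\gamma+\delta_\mathsf{B})$, and $c\,\partial\bar w/\partial c=\mu\,\partial\bar w/\partial\mu=-\lambda c\mu\theta_\mathsf{A}/(c\mu\theta_\mathsf{A}+\theta\,\mathrm{PD})^2$.

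You instead observe that the first-group metrics depend on $(c,\mu,p_+)$ only through $c\mu p_+$. The second-group metrics, at fixed $p_+$, depend only through $c\mu$. One chain-rule identity then does all the work, with no differentiation. Your route explains \emph{why} staffing and handling time are interchangeable. It also extends automatically to any further metric built from $\bar q-c$, $\hat\lambda$, $\tilde\lambda$ and $\mathrm{EC}_\mathsf{R}$. The paper's explicit derivatives are what one actually needs for the inequality part and for the magnitude discussion in the text. Both arguments implicitly require strict overload, $\mathrm{PD}>0$, so that the elasticities are defined and small perturbations stay in the regime.

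\textbf{The parenthetical.} The sentence after the corollary (``they need not be the same as that of the success probability'') shows what ``different'' means. It means different from the $p_+$-elasticity of the \emph{same} metric. It does not mean different from a first-group elasticity, which is not even a single number, since $\mathrm{EC}_\mathsf{B}$ has elasticity $1$.

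Under that reading no genericity caveat is needed, and your setup gives the result at once. Write $K=\gamma c\mu p_+/(\gamma+\delta_\mathsf{B})$ and $D=c\mu\theta_\mathsf{A}+\theta\,\mathrm{PD}$. Then the $c$-elasticity minus the $p_+$-elasticity is:
\begin{itemize}
\item exactly $-1$ for $\tilde w$, as you found;
\item exactly $c\mu/\mathrm{EC}_\mathsf{R}>0$ for $\mathrm{EC}_\mathsf{R}$.
\end{itemize}
For $\bar w$ the two elasticities are $-\lambda c\mu\theta_\mathsf{A}/(\mathrm{PD}\,D)$ and $-Kc\mu\theta_\mathsf{A}/(\mathrm{PD}\,D)$. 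These differ because $\lambda-K=\mathrm{PD}>0$.

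Under your reading, by contrast, the caveat really is needed. The $c$-elasticity of $\bar w$ coincides with $-K/(\lambda-K)$ precisely when $\theta_\mathsf{A}(\gamma+\delta_\mathsf{B})=\gamma p_+\theta$. Admissible parameters can satisfy this condition.
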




Then, for the waiting time, speed to answer, and endogenous congestion from re-dials, changes to the staffing level and to the service rate again yield equivalent elasticities, but they need not be the same as that of the success probability.
These corollaries together show that changes to $c$ and $\mu$ have the same respective impact for each of the six performance metrics, suggesting that they have equivalent operational impacts despite their likely quite different costs. 

\paragraph{Comparison of our endogenous-congestion-aware queueing model to generic Erlang-A staffing guidance}
In \Cref{tab:compact_staffing_benchmark}, we compare our model to staffing guidance from naive application of the standard Erlang-A model. Using the Erlang-A model to guide counterfactual staffing decisions is \textit{overly optimistic} and \emph{fundamentally under-staffed} since an abandoning customer leaves the system forever in the Erlang-A model, and therefore the phenomenon of abandonment actually lowers wait times overall. However, our proposed model with re-dials accurately represents the endogenous congestion where an abandoning caller today can actually increase congestion in the future when they re-dial or re-connect. Therefore, its requisite staffing prescriptions increase significantly beyond that of the Erlang-A model. In fact, in applying both models to the four call centers within the \emph{Holmes v. Knodell} data and computing staffing levels necessary for each to achieve standard performance metric targets, we find that the Erlang-A yields as large as a 84\% shortfall of the staffing prescribed from our endogenous-congestion-aware model's guidance. The largest discrepancies are driven by call centers where fitting Erlang-A to observed arrivals cannot fit observed wait times well, further demonstrating that Erlang-A's assumptions are unsuitable for redials in social service delivery.
  \begin{table}[htbp]
  \centering
  \caption{Our model vs. Erlang A. Entries
  in the target columns are required staffing counts to achieve that target. Shortfall is
  $(\mathrm{our\ model}-\mathrm{Erlang\ A})/\mathrm{our\ model}$. The
  abandonment target is the lost/abandoned share in our model $(1-\tilde \lambda /  \hat \lambda)$ and Erlang A's
  abandonment probability. The last column is Erlang A's absolute relative error
  in fitted average wait, $|\widehat{\bar w}_{EA}-\bar w|/\bar w$.}
  \label{tab:compact_staffing_benchmark}
  \begin{tabular}{lc ccc ccc c}
  \toprule
  && \multicolumn{3}{c}{Average wait $<1$ min} & \multicolumn{3}{c}{Abandonment
  $<10\%$} & \multicolumn{1}{c}{Erlang A $\bar w$ error} \\
  \cline{3-5}\cline{6-8}\cline{9-9}
  CC & $c$ & Our model & Erlang A & Shortfall & Our model &
  Erlang A & Shortfall & $|\Delta \bar w|/\bar w$ \\
  \hline
  1 & 32.5 & 52 & 13 & 75.0\% & 52 & 8 & 84.6\% & 100.0\% \\
  2 & 52 & 194 & 125 & 35.6\% & 174 & 107 & 38.5\% & 107.1\% \\
  3 & 72 & 217 & 203 & 6.5\% & 195 & 177 & 9.2\% & 22.1\% \\
  4 & 51 & 58 & 58 & 0.0\% & 58 & 56 & 3.4\% & 5.4\% \\
  \bottomrule
  \end{tabular}
\vspace{-0.1in}
  \end{table}

\paragraph{Possible improvements: staffing vs.~system design efficiency curves based on \emph{Holmes v. Knodell} data}

\begin{figure}[t!]
    \centering
    \subcaptionbox{Staffing-AHT curves, wait times; decreases in AHT\label{fig:wait_time-aht}}{%
\includegraphics[width=0.24\textwidth]{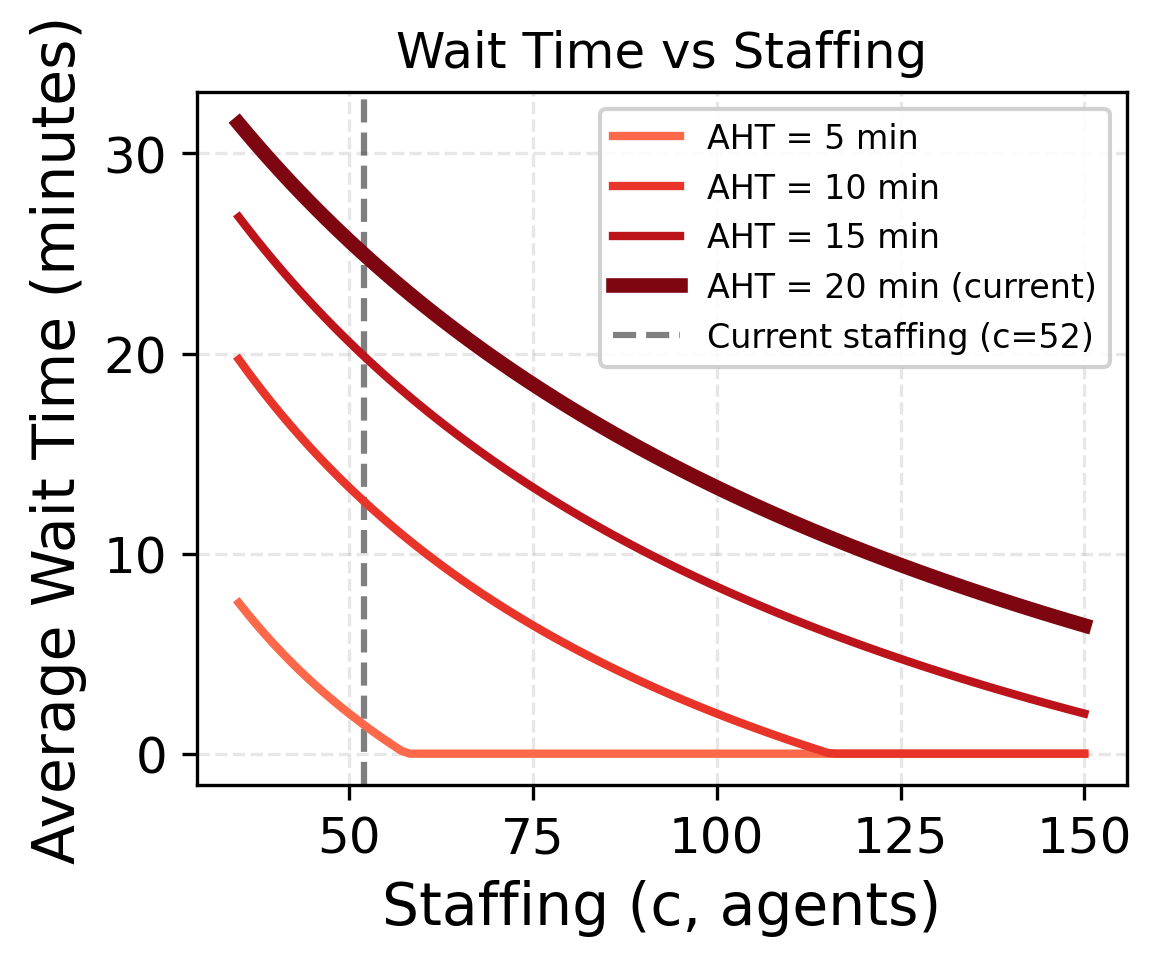}%
    }
    \hfill
    \subcaptionbox{Staffing-AHT curves, procedural denials\label{fig:denials-aht}}{%
        \includegraphics[width=0.24\textwidth]{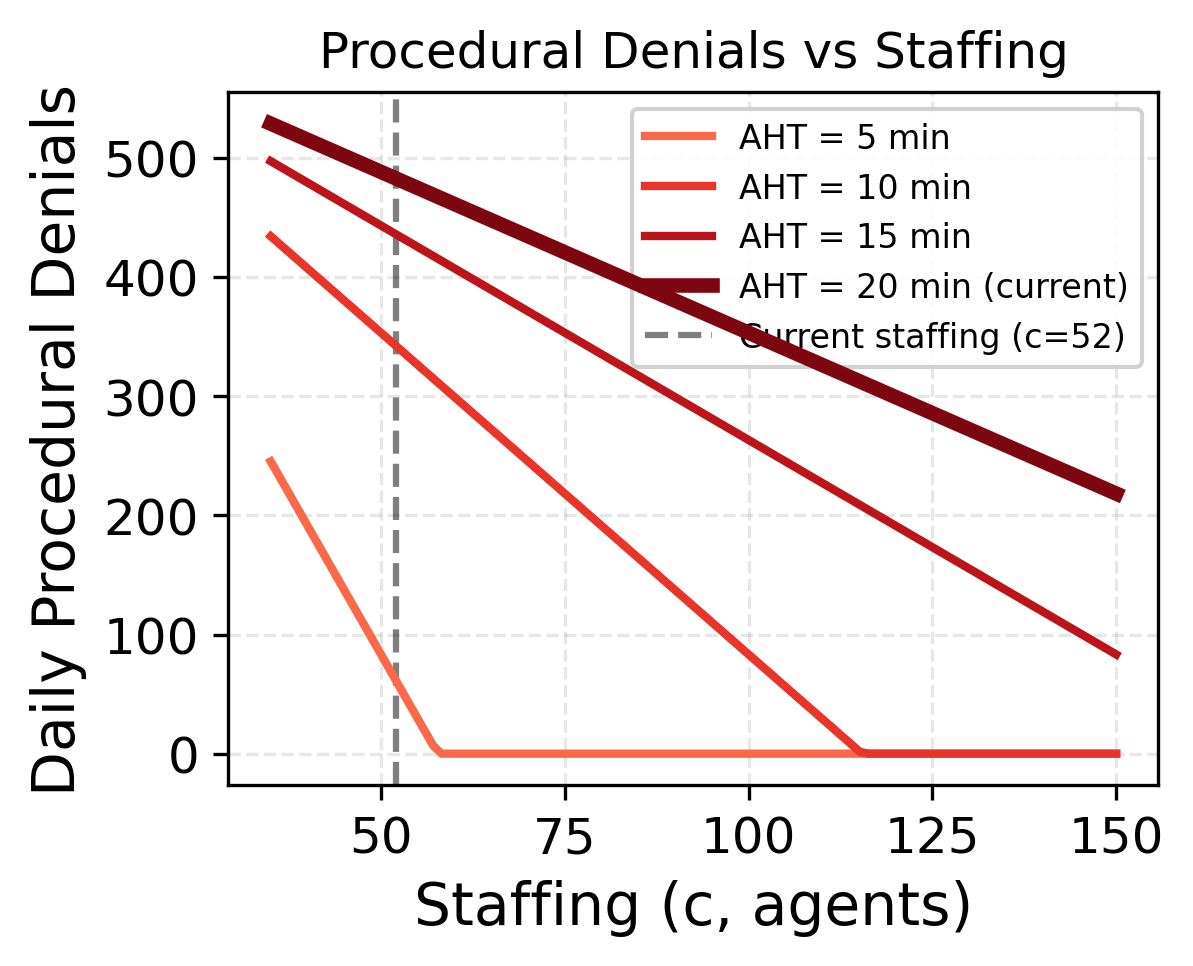}%
    }
        \subcaptionbox{Wait time vs. recertification length, $\delta_{\mathsf{B}}$\label{fig-wait-recert}}{%
\includegraphics[width=0.24\textwidth]{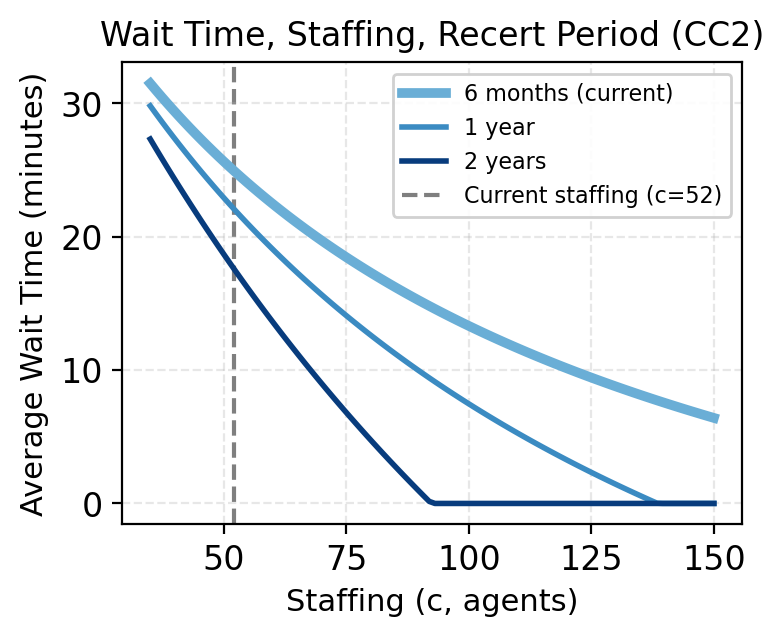}%
    }
    \hfill
    \subcaptionbox{Daily procedural denials vs recertification period (changes in $\delta_{\mathsf{B}}$). \label{fig:denials-recert}}{%
        \includegraphics[width=0.26\textwidth]{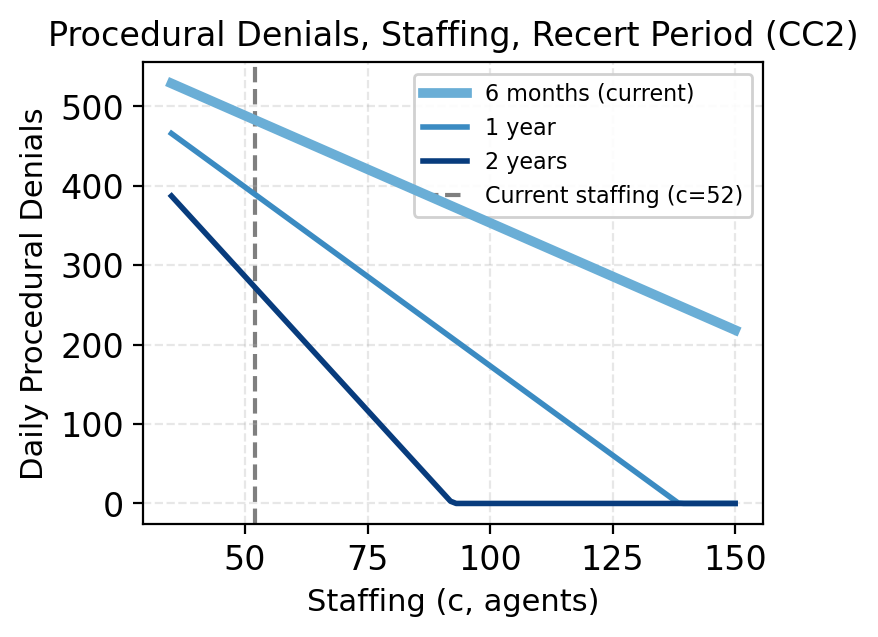}%
    }
    \vspace{-.2in}
    \caption{Performance metrics vs staffing levels for different potential changes in call center system design.}
    \label{fig:staffing-curves}
\end{figure}


While the prior analysis highlights the impact of potential system design changes one at a time, in \Cref{fig:staffing-curves} we leverage the model fitted to \textit{Holmes v. Knodell} call center data to explore the joint efficiency of \textit{simultaneous} improvements in staffing and other system design changes. Importantly, our richer model enables also assessing the impact of these other policy design and service delivery changes on performance metrics. \Cref{fig:staffing-curves} illustrates in \textit{AHT vs. staffing curves} how performance improvements in wait times and procedural denials depend on interactions between system design and staffing (on the x-axis) improvements. We generate these figures for one of the call centers (\#2); results for other call centers are qualitatively similar, hence omitted. We use a $\theta$ parameter vector $(\theta_{\mathsf{A}}, \theta_{\mathsf{S}},\theta_{\mathsf{L}}) =(4,3,3)$ --- although different values of $\theta_{\mathsf{A}}, \theta_{\mathsf{S}},\theta_{\mathsf{L}}$ result in different absolute magnitude estimates for wait times, all of our commentary depends on \textit{relative} orderings that remain the same, and procedural denial metrics depend only on $\theta_{\mathsf{A}}.$ \Cref{fig:wait_time-aht,fig:denials-aht} compare average wait-time and procedural denials on the y-axis as staffing changes, for different improvements in \textit{average handling time (AHT)}; each value of the AHT generates a different staffing curve.\footnote{System design changes that reduce AHT might include recent interest in AI chatbots, automated responses, increased training for staffing, or self-service options --- these could handle simple questions more briefly or redirect callers to new self-service channels.} In \Cref{fig:wait_time-aht} we see wait times decrease sharply at low staffing levels, with diminishing returns at higher staffing. Meanwhile, procedural denials (under our model --- total abandonment) decrease linearly in staffing but decreasing AHT increases the rate of enrolled individuals, and hence further improves the \textit{efficiency} of staffing improvements (slope of decrease in denials). Jointly considering system design improvements and staffing increases can \textit{increase the performance improvements} under practically limited additional staffing resources. Conversely, \textit{without} making changes in call center design and service process, it can take an unrealistically large number of additional staff to reduce wait times or procedural denials to acceptable levels. \rev{To concretely illustrate practical insights, recall that DSS estimated that it needed at least 150 additional staff to reduce wait times below 2 minutes. Our quantitative models estimate that to achieve this \textit{without} any additional staff, DSS must reduce handling time by 75\%, or lengthen recertification periods to 1 year. With just $50$ extra staff, it suffices to reduce handling time by 50\%, and so on.} 

In \Cref{fig-wait-recert,fig:denials-recert} we investigate the same types of staffing curves if we consider \textit{longer re-certification cycles}, from 6 months to 1 or 2 years. While longer re-certification cycles don't improve wait-time efficiency as much as improving service time does, longer recertification periods continue to improve procedural denial efficiency. The moderated joint impact on \textit{wait times} reflects an interesting distinction between \textit{per-individual} performance metrics like average wait time per caller vs. \textit{absolute magnitude} performance metrics like total abandonment/procedural denials. Longer re-certification periods maintain enrollment loads but \textit{reduce} arrival rates, therefore averaging waits over a \textit{smaller} total load and attenuating the improvements from higher throughput. Interventions that reduce {congestion and arrivals} have similar \textit{self-attenuating} effects on rate-based rather than absolute metrics.

\rev{\textit{Broader practical context of potential interventions: }
Different interventions have different institutional constraints: longer recertification periods may require federal policy changes, while reductions in handling time require service redesign that preserves accuracy and quality. 
Practitioners realize improvements via more specific tactics beyond our abstract discussions here \citep{nj-call-centers}. 
While there is also general interest in AI, it cannot completely replace interview provision (due to merit pay designations on who can complete what work), but would be deployed in narrow ways to improve processes.
Our model translates such heterogeneous interventions into key operational inputs, such as arrivals, handling time, and completion rates.
}

\section{\rev{Conclusion}}

\rev{In this work, we introduced a performance evaluation framework for queueing analysis in social services, fit model parameters to call center data from \textit{Holmes v. Knodell}, and derived analytical and practical quantified insights on substitutable improvement levers. 
Since there are no unilateral performance requirements for backlogs in social services, the ultimate goal of our performance evaluation framework is to enable agencies themselves (e.g. via a dashboard) to explore different performance improvement options ex-ante, transparently, and flexibly. 
}

\rev{\textit{Model limitations and future research: } 
The performance metrics we have studied here are of first-order importance operationally, but they are also merely results of the fluid model, which is itself a first-order approximation of the queueing model. Moreover, there are other queueing-theoretic quantities of at least equal practical importance
that cannot be analyzed by a fluid model. For instance, redials are a clear problem in the motivating \emph{Holmes v. Knodell} setting, both in terms of the load on the system from the endogenous congestion and in terms of the administrative burden on individuals. 
A truly probabilistic analysis of the model would allow us to study the distribution of a caller's requisite number of attempts before connection, thus offering insight for how this  emergent arbitrariness might be ameliorated.
}





\paragraph{Ethical Considerations Statement}

The goal of our work was to improve social impact and equity by carefully studying technical systems (queueing) that currently mediate access to crucial benefits, are currently failing, and lack accurate guidance in light of unique characteristics of social service delivery. Our work could have unintended impacts --- after all, we focus on introducing and analyzing performance metrics, but there's always Goodhart's law, ``When a measure becomes a target, it ceases to be a good measure". Indeed, we refer to prior instances of potential gaming or misreporting/misrepresentation of queueing performance metrics, like with efforts to improve the SSA waits \citep{warren}. We try to mitigate potential unintended consequences by highlighting some distinctions between different performance metrics and focusing on a general performance evaluation framework, rather than any single metric in particular. 
\bibliographystyle{ACM-Reference-Format}
\bibliography{FAccT-template/snap-refs.bib} 

\clearpage

\appendix

\section{Background}

\paragraph{Additional discussion on related work}
For the most part, the so-called \textit{civic tech} community is the technical practitioner community focused on digital delivery of services and provision of government benefits. Books and grey literature document the challenges of technical deployment and product design in government settings \citep{pahlka2023recoding}. Recent works in the FAccT community have also explored the interface of benefits technology and equity. \citet{escher2020exposing} audits benefits calculators, which are a category of online tools that distill the government's original, remarkably complex eligibility forms into more UX-friendly, shorter screeners. Such benefits screeners, however, might provide accessibility and reduce frictions at the expense of \textit{accuracy} in eligibility determinations. \citet{jo2025ai} conducts an interview study to assess SNAP applicants' interactions with potential LLM-based chatbots, studying the interface between trust issues in AI and potential improvements in administrative burdens. \citet{koenecke2023popular} find that an efficient advertising budget for GetCalfresh, which streamlines enrollment in California's CalFresh (SNAP), resulted in low takeup among Hispanic individuals: The authors conduct a survey and find general public support for \textit{equitable} advertising budget allocations with higher spending on advertising aimed at harder-to-reach populations.

\paragraph{Additional background on benefits tech}

Government digital infrastructure faces several structural challenges that prevent rapid iteration and performance evaluation feedback loops: contracting out implementation to large vendors in cumbersome public procurement processes, larger "waterfall" and non-"agile" contracts that introduce inflexible system design, limited in-house technical capacity, and vendor opacity due to trade secrets protection. For example, when Missouri announced intentions to deploy a new version of Medicaid home-based care determination algorithms, policy analysts and analysts had to band together to bring potentially affected individuals together to audit potential impacts on care. Benefits technological infrastructure (hence called "benefits tech" for short) falls in this category - it must interface with proprietary data mainframes and is often contracted out via public procurement, where few firms have the capacity to work with government legacy systems, and so there is little market pressure to compete on performance. They found that the updated algorithm would remove 66\% of affected individuals from crucial home nurse care as well as potential syntax errors in the algorithm \citep{noauthor_missouri_nodate}. Without this community audit, such deployments would have introduced harmful erroneous determinations silently --- it would be three more years before Missouri contracted a consultant to analyze the impacts of an updated algorithm. 

The call center system design and deployment in this setting is somewhat different from prior major benefits tech infrastructure projects, in that call center technology is in principle well-developed in the commercial sector. But, call center design and operations are usually contracted out, as Missouri DSS did to Genesys, a major provider. Designing call center systems and queues is a specialized skill with little guidance from FNS as to performance standards and unlikely in-house expertise in social services. 
\clearpage
\section{Dashboard}

\begin{figure}
    \centering
    \includegraphics[width=\linewidth]{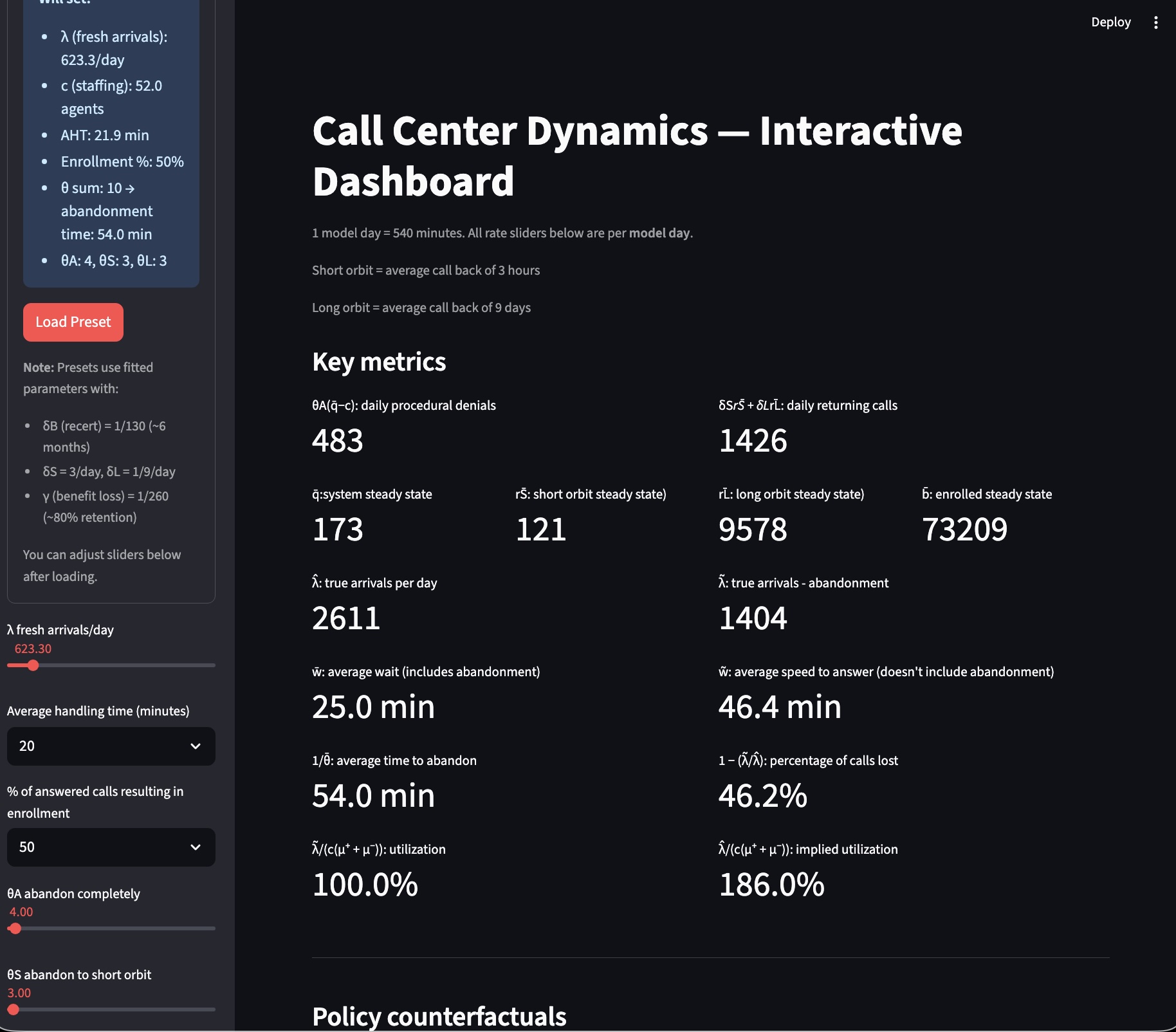}
    \caption{Screenshot of call center dashboard}
    \label{fig:dashboard}
\end{figure}
\clearpage

\section{Additional explanation}\label{apx-addl-discussion}

\paragraph{Additional explanation of Poisson thinning}
For example, consider the dynamics  of the number of presently enrolled  recipients. Intuitively, the next decrease to $B(t)$ will be either a departure via attrition from the benefits system or a return to the call center to re-certify. Because of the memoryless property of the underlying exponential random variables between events in Poisson processes, we can view this as a ``race'' between two exponentially distributed ``clocks'' -- one for attrition and one for re-certification.\footnote{Moreover, because of the thinning property of the Poisson itself, we can granularize this same interpretation to the level of each currently enrolled individual: for each current benefit recipient, there is a $\mathsf{Exp}(\gamma)$ clock for the time to potential attrition and a $\mathsf{Exp}(\delta_\mathsf{B})$ clock for the time to potential re-certification, and whichever clock rings first will be what happens next.} Hence, by these same properties, there is a $\gamma/(\gamma + \delta_\mathsf{B})$  probability that the recipient departs via attrition and a $\delta_\mathsf{B}/(\gamma + \delta_\mathsf{B})$ probability that they instead call to re-certify.
Similar interpretations abound throughout the model: an abandoning caller enters the short orbit with probability $\theta_\mathsf{S}/(\theta_\mathsf{A} + \theta_\mathsf{S} + \theta_\mathsf{L})$, a concluding caller enters the long orbit with probability $\mu_-/(\mu_+ + \mu_-)$, and each of the $(Q(t) \wedge c)$ ongoing calls are equally likely to be the next to conclude.
\section{Call center arrival data }\label{apx-sec-call-center-ts-analysis}

\section*{Data Dictionary: Call Center Performance Metrics (Offered, Answered, Abandoned)}\label{apx-callcenterdata}

\begin{table}[h!]
    \centering
    \caption{Key Metrics from Offered, Answered, Abandoned Data Files}
    \label{tab:data_dictionary_offered_answered}
    \begin{tabular}{|l|p{10cm}|}
        \hline
        \textbf{Field Name} & \textbf{Description} \\
        \hline
        \texttt{Offered} & Total number of incoming calls presented to the call system or queue. \\
        \hline
        \texttt{Answered} & Total number of calls successfully answered by an agent. \\
        \hline
        \texttt{Abandon} & Total number of calls that hung up while waiting in the queue before being answered. \\
        \hline
        \texttt{Abandon \%} & Percentage of calls that abandoned (\texttt{Abandon} / \texttt{Offered}). \\
        \hline
        \texttt{Avg Wait} / \texttt{Average Wait} & Average time a caller spent waiting in the queue before being connected to an agent or abandoning. \\
        \hline
        \texttt{ASA (Average Speed of Answer)} & The average time a call is expected to wait in the queue before being answered by an agent. \\
        \hline
        \texttt{AHT (Average Handle Time)} & The average amount of time an agent spends on an answered call (including talk time and wrap-up work). \\
        \hline
        \texttt{Average Abandon Time} & The average amount of time abandoned calls spent in the queue before the caller hung up. \\
        \hline
        \texttt{Staffing} & The total number of staff members scheduled or logged in to handle calls during the interval. \\
        \hline
        \texttt{Calls/Person} & A staff productivity metric, calculated as calls handled per staff member. \\
        \hline
    \end{tabular}
\end{table}

\subsection{Parameter fitting based on data}


\paragraph{Parameters directly observed in data} 

From the data, we directly observe some of the parameters.



Recall that we have defined $\hat\lambda$ as the total arrival rate of calls, which we observe directly from data. This includes not only the fresh arrivals $\lambda$, i.e. calls originating not from prior cases, but also incoming calls from recertifications $(\delta_{\mathrm{B}} \bar{b})$, re-dials from prior abandoned or re-connected calls entering the short orbit $(\delta_{\mathrm{S}} \bar{r}_{\mathrm{S}})$, and re-dials from prior calls entering the long orbit, $(\delta_{\mathrm{L}} \bar{r}_{\mathrm{L}})$. 

From data, we also directly observe staffing levels, $c,$ for each call center -- we take averages. From SNAP eligibility guidelines, we fix the recertification period is at 6 months (calendar days) $\approx$ 128.5 model (call center) days (since it’s only
open weekdays). This fixes the parameter $\delta_{\mathsf{B}} = 1/128.5$.

The wait time data includes Average Handling Time ($AHT$), 
and therefore the completion rate of calls, $1/AHT$. Our model further distinguishes service completions into those that complete enrollment ($\mu_{\mathsf{+}}$) vs. those that require a re-connection/re-dial ($\mu_{\mathsf{-}}$). Therefore, completion rates $1/AHT$ are only informative of the sum of these enrollment/re-dial flows:
    $$
  \mu_{\mathsf{+}} + \mu_{\mathsf{-}} = \frac{\rev{MMin}}{AHT} $$
  To reduce parameter dimensionality, we fix a proportion $\rho\in[0,1]$ of completed calls that result in completed re/-enrollments, hence $\mu_{\mathsf{+}} = \rho \left( \frac{\rev{MMin}}{AHT} \right).$ Although $\rho$ is not observed in this data, it may be in other settings, or otherwise informed by the domain. 

\paragraph{Parametrizing  $\delta_{\mathsf{L}}$ from the call center arrival data.}
 The re-dial rate of the long orbit is $\delta_{\mathsf{L}}$, which governs the rate at which re-dials occur from the long orbit (longer timeframe of calling back, for example to obtain and confirm additional verifications). 
We fit vector-autoregressive regression (VAR) time-series models to the call center offered and abandoned time series data (adjusting for day-of-week fixed effects). In summary, the empirical time-series analysis surfaces dependence in call center volume (offered and abandoned calls) that operates both over the short-term (dependence on prior 1-2 days) and longer-range dependencies (7-9 days). We therefore set $\delta_{\mathsf{L}} = 1/9$. This is consistent with domain discussion about SNAP interviews and how they often require follow-up due to  interview questions that surface or clarify additional documentation needs. 

\subsection{Call center arrival rate analysis}\label{apx-callcenterarrival-ts}

\begin{table}[htb]
\centering
\caption{Call Center Statistics}
\label{tab:call_center_stats}
\begin{tabular}{ccccccc}
\hline
\textbf{Call Center} & \textbf{Mean Offered} & \textbf{Std } & \textbf{Mean Answered} & \textbf{Std } & \textbf{Mean Abandoned} & \textbf{Std } \\
\hline
2 & 3120.06 & 2573.98 & 1229.37 & 1026.57 & 1788.04 & 1872.56 \\
3 & 2555.30 & 891.27 & 882.08 & 281.89 & 1500.85 & 1020.05 \\
4 & 717.15 & 211.77 & 699.21 & 210.99 & 17.88 & 22.54 \\
5 & 6829.01 & 1943.26 & 2729.42 & 866.85 & 3880.46 & 2054.77 \\
\hline
\end{tabular}
\end{table}


\paragraph{Periodicity}
We start with one call center, the second in pages 37-87 of exhibit 87. 
As expected, there are day of week effects in calls offered and abandoned. Call volume is the highest on Monday, and decreases over the week (for both offered and abandoned calls). 
We adjust for day of week effects by fitting outcomes to a linear model of just day-of-week effects, subtract off the predictions based on just day-of-week and add back the grand mean. 

Next we fit a vector autoregressive model to the call center time-series data of offered and abandoned calls, separately for each call center. For ease of interpretation, we don’t difference the data, although a Dickey-Fuller test finds only the abandoned calls time series is stationary. 

We use 10 lags in the time series analysis. Note that abandoned calls is a proportion of offered calls of the day, and so naturally tracks offered calls. 

To summarize the time series dependence patterns: Both the volume of offered and abandoned calls track 1) recent, past 1-2 days of offered and abandoned calls and 2) mid-longer range dependencies, like 7-10 days prior. The number of offered calls has a significant coefficient on lagged offered calls for at least 7 out of the first 9 lags, i.e. persistence of call volume (but not necessarily abandoned calls). 

The dependence on prior abandonment is intermittent. Overall, abandonment depends more on prior abandonment than the number of offered calls does. But, we also find occasional statistically significant coefficients on long-range (7-9 day) abandonment.

\begin{table}[htbp]
\centering
\caption{Call center 2}
\label{table-callcenter-1}
\small
\renewcommand{\arraystretch}{0.9}
\setlength{\tabcolsep}{4pt}

\begin{tabularx}{0.92\linewidth}{@{\extracolsep{\fill}}>{\raggedright\arraybackslash}Xcc}
\hline\hline
 & \multicolumn{2}{c}{\textit{Dependent variable:}} \\
\cline{2-3}
 & (Offered) & (Abandoned) \\
\hline

Adjusted Offered (Lag 1)  & 0.204$^{***}$ & $-$0.338$^{***}$ \\
 & (0.043) & (0.036) \\

Adjusted Abandon (Lag 1)  & 0.603$^{***}$ & 1.109$^{***}$ \\
 & (0.053) & (0.043) \\

Adjusted Offered (Lag 2)  & 0.541$^{***}$ & 0.154$^{***}$ \\
 & (0.050) & (0.041) \\

Adjusted Abandon (Lag 2)  & $-$0.466$^{***}$ & $-$0.090$^{*}$ \\
 & (0.065) & (0.054) \\

Adjusted Offered (Lag 3)  & 0.091$^{*}$ & 0.123$^{***}$ \\
 & (0.053) & (0.043) \\

Adjusted Abandon (Lag 3)  & $-$0.081 & $-$0.114$^{**}$ \\
 & (0.067) & (0.055) \\

Adjusted Offered (Lag 4)  & 0.114$^{**}$ & 0.069 \\
 & (0.053) & (0.043) \\

Adjusted Abandon (Lag 4)  & $-$0.038 & 0.035 \\
 & (0.067) & (0.055) \\

Adjusted Offered (Lag 5)  & 0.125$^{**}$ & $-$0.024 \\
 & (0.052) & (0.043) \\

Adjusted Abandon (Lag 5)  & $-$0.031 & 0.125$^{**}$ \\
 & (0.067) & (0.055) \\

Adjusted Offered (Lag 6)  & $-$0.121$^{**}$ & $-$0.079$^{*}$ \\
 & (0.053) & (0.043) \\

Adjusted Abandon (Lag 6)  & $-$0.065 & $-$0.071 \\
 & (0.067) & (0.055) \\

Adjusted Offered (Lag 7)  & $-$0.196$^{***}$ & $-$0.016 \\
 & (0.052) & (0.043) \\

Adjusted Abandon (Lag 7)  & 0.264$^{***}$ & 0.061 \\
 & (0.067) & (0.055) \\

Adjusted Offered (Lag 8)  & 0.076 & 0.060 \\
 & (0.053) & (0.043) \\

Adjusted Abandon (Lag 8)  & $-$0.075 & $-$0.053 \\
 & (0.067) & (0.055) \\

Adjusted Offered (Lag 9)  & 0.285$^{***}$ & 0.158$^{***}$ \\
 & (0.052) & (0.043) \\

Adjusted Abandon (Lag 9)  & $-$0.312$^{***}$ & $-$0.186$^{***}$ \\
 & (0.066) & (0.054) \\

Adjusted Offered (Lag 10) & 0.012 & $-$0.086$^{**}$ \\
 & (0.050) & (0.041) \\

Adjusted Abandon (Lag 10) & 0.071 & 0.150$^{***}$ \\
 & (0.064) & (0.053) \\

Adjusted Offered (Lag 11) & $-$0.121$^{***}$ & 0.018 \\
 & (0.044) & (0.036) \\

Adjusted Abandon (Lag 11) & 0.096$^{*}$ & $-$0.039 \\
 & (0.053) & (0.044) \\

const & 31.047 & 5.695 \\
 & (20.605) & (16.878) \\

\hline \\[-1.8ex]
Observations & 2,076 & 2,076 \\
R$^{2}$ & 0.952 & 0.940 \\
Adjusted R$^{2}$ & 0.952 & 0.939 \\
Residual Std. Error (df = 2053) & 562.739 & 460.946 \\
F Statistic (df = 22; 2053) & 1,868.297$^{***}$ & 1,457.608$^{***}$ \\
\hline
\hline \\[-1.8ex]
\textit{Note:} & \multicolumn{2}{r}{$^{*}$p$<$0.1; $^{**}$p$<$0.05; $^{***}$p$<$0.01} \\
\end{tabularx}
\end{table}

\begin{table}[!htbp]
\centering
\caption{Call Center 3}
\label{tbl-call-center-3}
\small
\renewcommand{\arraystretch}{0.88}
\setlength{\tabcolsep}{3.5pt}

\begin{tabularx}{0.92\linewidth}{@{\extracolsep{\fill}}>{\raggedright\arraybackslash}Xcc}
\hline\hline
 & \multicolumn{2}{c}{\textit{Dependent variable:}} \\
\cline{2-3}
 & (Offered) & (Abandoned) \\
\hline

Adjusted Offered (Lag 1)  & $-$0.091 & $-$0.434$^{***}$ \\
 & {\scriptsize (0.083)} & {\scriptsize (0.075)} \\

Adjusted Abandon (Lag 1)  & 0.886$^{***}$ & 1.316$^{***}$ \\
 & {\scriptsize (0.092)} & {\scriptsize (0.083)} \\

Adjusted Offered (Lag 2)  & 0.493$^{***}$ & 0.181$^{**}$ \\
 & {\scriptsize (0.089)} & {\scriptsize (0.081)} \\

Adjusted Abandon (Lag 2)  & $-$0.448$^{***}$ & $-$0.209$^{**}$ \\
 & {\scriptsize (0.117)} & {\scriptsize (0.106)} \\

Adjusted Offered (Lag 3)  & 0.081 & $-$0.009 \\
 & {\scriptsize (0.094)} & {\scriptsize (0.085)} \\

Adjusted Abandon (Lag 3)  & $-$0.167 & $-$0.059 \\
 & {\scriptsize (0.119)} & {\scriptsize (0.108)} \\

Adjusted Offered (Lag 4)  & 0.014 & 0.028 \\
 & {\scriptsize (0.094)} & {\scriptsize (0.085)} \\

Adjusted Abandon (Lag 4)  & 0.048 & 0.014 \\
 & {\scriptsize (0.119)} & {\scriptsize (0.108)} \\

Adjusted Offered (Lag 5)  & 0.123 & $-$0.030 \\
 & {\scriptsize (0.094)} & {\scriptsize (0.085)} \\

Adjusted Abandon (Lag 5)  & 0.025 & 0.230$^{**}$ \\
 & {\scriptsize (0.119)} & {\scriptsize (0.107)} \\

Adjusted Offered (Lag 6)  & 0.047 & 0.108 \\
 & {\scriptsize (0.094)} & {\scriptsize (0.085)} \\

Adjusted Abandon (Lag 6)  & $-$0.161 & $-$0.222$^{**}$ \\
 & {\scriptsize (0.118)} & {\scriptsize (0.107)} \\

Adjusted Offered (Lag 7)  & $-$0.079 & $-$0.011 \\
 & {\scriptsize (0.094)} & {\scriptsize (0.085)} \\

Adjusted Abandon (Lag 7)  & 0.184 & 0.081 \\
 & {\scriptsize (0.119)} & {\scriptsize (0.108)} \\

Adjusted Offered (Lag 8)  & 0.043 & 0.038 \\
 & {\scriptsize (0.093)} & {\scriptsize (0.084)} \\

Adjusted Abandon (Lag 8)  & $-$0.158 & $-$0.152 \\
 & {\scriptsize (0.118)} & {\scriptsize (0.107)} \\

Adjusted Offered (Lag 9)  & 0.317$^{***}$ & 0.230$^{***}$ \\
 & {\scriptsize (0.090)} & {\scriptsize (0.081)} \\

Adjusted Abandon (Lag 9)  & $-$0.270$^{**}$ & $-$0.172 \\
 & {\scriptsize (0.116)} & {\scriptsize (0.104)} \\

Adjusted Offered (Lag 10) & $-$0.0003 & $-$0.108 \\
 & {\scriptsize (0.086)} & {\scriptsize (0.078)} \\

Adjusted Abandon (Lag 10) & 0.053 & 0.129 \\
 & {\scriptsize (0.098)} & {\scriptsize (0.088)} \\

const & 151.875$^{*}$ & 88.717 \\
 & {\scriptsize (78.981)} & {\scriptsize (71.421)} \\

\hline
Observations & 480 & 480 \\
R$^{2}$ & 0.830 & 0.897 \\
Adjusted R$^{2}$ & 0.823 & 0.893 \\
Residual Std. Error (df = 459) & 365.760 & 330.749 \\
F Statistic (df = 20; 459) & 112.110$^{***}$ & 200.132$^{***}$ \\
\hline\hline
\multicolumn{3}{r}{\scriptsize $^{*}$p$<$0.1; $^{**}$p$<$0.05; $^{***}$p$<$0.01} \\
\end{tabularx}
\end{table}

\begin{table}[!htbp]
\centering
\caption{Call Center 4}
\label{tbl-call-center-4}
\small
\renewcommand{\arraystretch}{0.8}
\setlength{\tabcolsep}{3.5pt}

\begin{tabularx}{0.92\linewidth}{@{\extracolsep{\fill}}>{\raggedright\arraybackslash}Xcc}
\hline\hline
 & \multicolumn{2}{c}{\textit{Dependent variable:}} \\
\cline{2-3}
 & Offered & Abandoned \\
\hline

Adjusted Offered (Lag 1)  & 0.453$^{***}$ & 0.012$^{**}$ \\
 & {\scriptsize (0.048)} & {\scriptsize (0.005)} \\

Adjusted Abandon (Lag 1)  & 0.018 & 0.373$^{***}$ \\
 & {\scriptsize (0.478)} & {\scriptsize (0.050)} \\

Adjusted Offered (Lag 2)  & 0.193$^{***}$ & $-$0.024$^{***}$ \\
 & {\scriptsize (0.052)} & {\scriptsize (0.005)} \\

Adjusted Abandon (Lag 2)  & $-$0.421 & 0.044 \\
 & {\scriptsize (0.502)} & {\scriptsize (0.052)} \\

Adjusted Offered (Lag 3)  & 0.071 & 0.001 \\
 & {\scriptsize (0.055)} & {\scriptsize (0.006)} \\

Adjusted Abandon (Lag 3)  & 0.069 & 0.610$^{***}$ \\
 & {\scriptsize (0.508)} & {\scriptsize (0.053)} \\

Adjusted Offered (Lag 4)  & 0.065 & 0.012$^{**}$ \\
 & {\scriptsize (0.054)} & {\scriptsize (0.006)} \\

Adjusted Abandon (Lag 4)  & 0.436 & $-$0.159$^{*}$ \\
 & {\scriptsize (0.924)} & {\scriptsize (0.096)} \\

Adjusted Offered (Lag 5)  & $-$0.017 & $-$0.006 \\
 & {\scriptsize (0.055)} & {\scriptsize (0.006)} \\

Adjusted Abandon (Lag 5)  & 0.357 & 0.066 \\
 & {\scriptsize (0.954)} & {\scriptsize (0.099)} \\

Adjusted Offered (Lag 6)  & $-$0.089 & $-$0.001 \\
 & {\scriptsize (0.055)} & {\scriptsize (0.006)} \\

Adjusted Abandon (Lag 6)  & $-$0.788 & 0.108 \\
 & {\scriptsize (0.963)} & {\scriptsize (0.100)} \\

Adjusted Offered (Lag 7)  & 0.006 & $-$0.010$^{*}$ \\
 & {\scriptsize (0.055)} & {\scriptsize (0.006)} \\

Adjusted Abandon (Lag 7)  & 0.799 & $-$0.062 \\
 & {\scriptsize (0.971)} & {\scriptsize (0.101)} \\

Adjusted Offered (Lag 8)  & 0.065 & 0.018$^{***}$ \\
 & {\scriptsize (0.055)} & {\scriptsize (0.006)} \\

Adjusted Abandon (Lag 8)  & 0.070 & $-$0.218$^{**}$ \\
 & {\scriptsize (0.974)} & {\scriptsize (0.101)} \\

Adjusted Offered (Lag 9)  & 0.113$^{**}$ & 0.018$^{***}$ \\
 & {\scriptsize (0.056)} & {\scriptsize (0.006)} \\

Adjusted Abandon (Lag 9)  & $-$1.014 & $-$0.240$^{**}$ \\
 & {\scriptsize (0.974)} & {\scriptsize (0.101)} \\

Adjusted Offered (Lag 10) & 0.021 & $-$0.009 \\
 & {\scriptsize (0.056)} & {\scriptsize (0.006)} \\

Adjusted Abandon (Lag 10) & 0.403 & $-$0.044 \\
 & {\scriptsize (0.972)} & {\scriptsize (0.101)} \\

Adjusted Offered (Lag 11) & $-$0.063 & $-$0.005 \\
 & {\scriptsize (0.056)} & {\scriptsize (0.006)} \\

Adjusted Abandon (Lag 11) & $-$0.815 & $-$0.080 \\
 & {\scriptsize (0.954)} & {\scriptsize (0.099)} \\

Adjusted Offered (Lag 12) & $-$0.053 & 0.005 \\
 & {\scriptsize (0.056)} & {\scriptsize (0.006)} \\

Adjusted Abandon (Lag 12) & 0.655 & 0.214$^{**}$ \\
 & {\scriptsize (0.946)} & {\scriptsize (0.098)} \\

Adjusted Offered (Lag 13) & 0.016 & $-$0.005 \\
 & {\scriptsize (0.054)} & {\scriptsize (0.006)} \\

Adjusted Abandon (Lag 13) & $-$1.006 & 0.321$^{***}$ \\
 & {\scriptsize (0.738)} & {\scriptsize (0.076)} \\

Adjusted Offered (Lag 14) & 0.085$^{*}$ & $-$0.006 \\
 & {\scriptsize (0.049)} & {\scriptsize (0.005)} \\

Adjusted Abandon (Lag 14) & 0.949 & 0.024 \\
 & {\scriptsize (0.732)} & {\scriptsize (0.076)} \\

const & 105.124$^{***}$ & 1.817 \\
 & {\scriptsize (34.398)} & {\scriptsize (3.561)} \\

\hline
Observations & 476 & 476 \\
R$^{2}$ & 0.531 & 0.570 \\
Adjusted R$^{2}$ & 0.502 & 0.543 \\
Residual Std. Error (df = 447) & 143.892 & 14.896 \\
F Statistic (df = 28; 447) & 18.106$^{***}$ & 21.184$^{***}$ \\
\hline\hline
\multicolumn{3}{r}{\scriptsize $^{*}$p$<$0.1; $^{**}$p$<$0.05; $^{***}$p$<$0.01} \\
\end{tabularx}
\end{table}

\begin{table}[!htbp]
\centering
\caption{Call Center 5}
\label{tbl-call-center-5}
\small
\renewcommand{\arraystretch}{0.88}
\setlength{\tabcolsep}{3.5pt}

\begin{tabularx}{0.92\linewidth}{@{\extracolsep{\fill}}>{\raggedright\arraybackslash}Xcc}
\hline\hline
 & \multicolumn{2}{c}{\textit{Dependent variable:}} \\
\cline{2-3}
 & (Offered) & (Abandoned) \\
\hline

Adjusted Offered (Lag 1)  & 0.017 & $-$0.402$^{***}$ \\
 & {\scriptsize (0.099)} & {\scriptsize (0.098)} \\

Adjusted Abandon (Lag 1)  & 0.754$^{***}$ & 1.183$^{***}$ \\
 & {\scriptsize (0.102)} & {\scriptsize (0.100)} \\

Adjusted Offered (Lag 2)  & 0.586$^{***}$ & 0.216$^{**}$ \\
 & {\scriptsize (0.107)} & {\scriptsize (0.105)} \\

Adjusted Abandon (Lag 2)  & $-$0.489$^{***}$ & $-$0.126 \\
 & {\scriptsize (0.119)} & {\scriptsize (0.117)} \\

Adjusted Offered (Lag 3)  & 0.151 & 0.142 \\
 & {\scriptsize (0.113)} & {\scriptsize (0.111)} \\

Adjusted Abandon (Lag 3)  & $-$0.196 & $-$0.212$^{*}$ \\
 & {\scriptsize (0.124)} & {\scriptsize (0.122)} \\

Adjusted Offered (Lag 4)  & 0.027 & $-$0.050 \\
 & {\scriptsize (0.113)} & {\scriptsize (0.111)} \\

Adjusted Abandon (Lag 4)  & 0.044 & 0.140 \\
 & {\scriptsize (0.123)} & {\scriptsize (0.122)} \\

Adjusted Offered (Lag 5)  & $-$0.219$^{*}$ & $-$0.240$^{**}$ \\
 & {\scriptsize (0.113)} & {\scriptsize (0.111)} \\

Adjusted Abandon (Lag 5)  & 0.269$^{**}$ & 0.341$^{***}$ \\
 & {\scriptsize (0.123)} & {\scriptsize (0.121)} \\

Adjusted Offered (Lag 6)  & 0.064 & 0.038 \\
 & {\scriptsize (0.111)} & {\scriptsize (0.109)} \\

Adjusted Abandon (Lag 6)  & $-$0.152 & $-$0.129 \\
 & {\scriptsize (0.122)} & {\scriptsize (0.120)} \\

Adjusted Offered (Lag 7)  & $-$0.021 & 0.108 \\
 & {\scriptsize (0.108)} & {\scriptsize (0.107)} \\

Adjusted Abandon (Lag 7)  & 0.114 & $-$0.060 \\
 & {\scriptsize (0.118)} & {\scriptsize (0.116)} \\

Adjusted Offered (Lag 8)  & 0.004 & 0.002 \\
 & {\scriptsize (0.100)} & {\scriptsize (0.099)} \\

Adjusted Abandon (Lag 8)  & $-$0.046 & $-$0.001 \\
 & {\scriptsize (0.111)} & {\scriptsize (0.110)} \\

Adjusted Offered (Lag 9)  & 0.355$^{***}$ & 0.225$^{**}$ \\
 & {\scriptsize (0.092)} & {\scriptsize (0.090)} \\

Adjusted Abandon (Lag 9)  & $-$0.331$^{***}$ & $-$0.224$^{**}$ \\
 & {\scriptsize (0.096)} & {\scriptsize (0.094)} \\

const & 343.921$^{**}$ & 78.282 \\
 & {\scriptsize (168.865)} & {\scriptsize (166.516)} \\

\hline
Observations & 481 & 481 \\
R$^{2}$ & 0.876 & 0.896 \\
Adjusted R$^{2}$ & 0.871 & 0.892 \\
Residual Std. Error (df = 462) & 673.272 & 663.906 \\
F Statistic (df = 18; 462) & 180.691$^{***}$ & 220.763$^{***}$ \\
\hline\hline
\multicolumn{3}{r}{\scriptsize $^{*}$p$<$0.1; $^{**}$p$<$0.05; $^{***}$p$<$0.01} \\
\end{tabularx}
\end{table}

\clearpage
\section{Analysis}

\begin{lemma}
Suppose the system is in the overloaded regime. Then, the total arrival rate and the total arrival rate without abandonments are given by
\begin{align}
\hat \lambda
&=
c\mu
+
\frac{\theta}{\theta_\mathsf{A}}\left(\lambda - \frac{\gamma}{\gamma + \delta_\mathsf{B}} c\mu_+\right)
\qquad\quad
\text{  and  }
\qquad\qquad
\tilde\lambda
=
c\mu
,
\end{align}
respectively.
\end{lemma}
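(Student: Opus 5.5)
The plan is to read everything off the steady-state equations of the fluid model in the overloaded regime, using Corollary~\ref{overCor}. By that corollary, in the overloaded regime we have $\bar q \ge c$, so $(\bar q - c)^+ = \bar q - c$ and $\bar q \wedge c = c$. Moreover,
\begin{align*}
\bar q - c = \frac{1}{\theta_\mathsf{A}}\left(\lambda - \frac{\gamma c \mu_+}{\gamma+\delta_\mathsf{B}}\right).
\end{align*}
The key observation is that the $\dot q = 0$ equation in~\eqref{qDef}, evaluated at the equilibrium, reads
\begin{align*}
0 = \hat\lambda - \theta(\bar q - c) - \mu c .
\end{align*}
This holds because $\hat \lambda$ is by definition exactly the sum of the four inflow terms $\lambda + \delta_\mathsf{B}\bar b + \delta_\mathsf{S}\bar r_\mathsf{S} + \delta_\mathsf{L}\bar r_\mathsf{L}$.

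First I will rearrange this identity to get $\hat\lambda = c\mu + \theta(\bar q - c)$. Substituting the expression for $\bar q - c$ then yields the claimed formula for $\hat\lambda$ immediately. Second, since $\tilde\lambda = \hat\lambda - \theta(\bar q - c)^+ = \hat\lambda - \theta(\bar q-c)$ in the overloaded regime, the same identity gives $\tilde\lambda = c\mu$.

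As a sanity check, and as an alternative route if one prefers not to lean on the $\dot q$ equation directly, I would also verify the result by plugging the explicit values of $\bar b$, $\bar r_\mathsf{S}$, and $\bar r_\mathsf{L}$ from~\eqref{eqn-steady-state-simplified} into the definition of $\hat\lambda$. Writing $x = \lambda - \gamma c\mu_+/(\gamma+\delta_\mathsf{B})$, the inflows are:
\begin{itemize}
\item from re-certification, $\delta_\mathsf{B}\bar b = \delta_\mathsf{B} c\mu_+/(\gamma+\delta_\mathsf{B})$;
\item from the short orbit, $\delta_\mathsf{S}\bar r_\mathsf{S} = (\theta_\mathsf{S}/\theta_\mathsf{A})\,x$;
\item from the long orbit, $\delta_\mathsf{L}\bar r_\mathsf{L} = c\mu_- + (\theta_\mathsf{L}/\theta_\mathsf{A})\,x$.
\end{itemize}
Adding $\lambda$ and regrouping, $\lambda = x + \gamma c\mu_+/(\gamma+\delta_\mathsf{B})$ combines with the $\delta_\mathsf{B}$ term to give $x + c\mu_+$. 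The total is therefore $c\mu_+ + c\mu_- + x(1 + \theta_\mathsf{S}/\theta_\mathsf{A} + \theta_\mathsf{L}/\theta_\mathsf{A}) = c\mu + (\theta/\theta_\mathsf{A})\,x$, as claimed.

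There is no real obstacle here. The only point needing care is confirming that the overloaded-regime simplifications from Corollary~\ref{overCor} apply, so that the positive-part and minimum operators can be dropped. The rest is bookkeeping, with the identity $\theta = \theta_\mathsf{A}+\theta_\mathsf{S}+\theta_\mathsf{L}$ doing the final collapse.
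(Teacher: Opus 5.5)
Your proof is correct. Your ``sanity check'' computation is exactly the paper's one-line argument: substitute the overloaded-regime values of $\bar b$, $\bar r_\mathsf{S}$, $\bar r_\mathsf{L}$ from Corollary~\ref{overCor} into the definition of $\hat\lambda$, then subtract $\theta(\bar q - c)$.

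Your primary route is a modest but genuine variant. You read off the $\dot q = 0$ balance at the queue station, which gives $\hat\lambda = c\mu + \theta(\bar q - c)$. From that, $\tilde\lambda = c\mu$ follows using only $\bar q \ge c$, without any of the orbit formulas. The closed-form steady state enters only through $\bar q - c$, when you write out $\hat\lambda$.

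What this buys is transparency. It shows that $\tilde\lambda = c\mu$ is simply flow conservation: with all servers busy, non-abandoning throughput must equal service capacity. The paper's substitution reaches the same identity by bookkeeping across all four stations, which in effect re-derives the $\dot q = 0$ equation from the closed forms.
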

\begin{proof}
These arrival rate identities follow immediately from the expressions for the steady-state fluid values in the overloaded regime from Corollary~\ref{overCor} and the definitions that $\hat \lambda = \lambda + \delta_\mathsf{B} \bar b + \delta_\mathsf{S} \bar r_\mathsf{S} + \delta_\mathsf{L} \bar r_\mathsf{L}$ and $\tilde \lambda = \hat \lambda - \theta(\bar q - c)$.
\end{proof}

\begin{lemma}\label{rateLemma}
Suppose the system is in the overloaded regime. 
Phrased in terms of the procedural denial rate ($\mathrm{PD}$), the mean number of waiting callers ($\bar{q} - c$), mean waiting time ($\bar w$), average speed to answer ($\tilde w$), the endogenous congestion from re-dials ($\mathrm{EC}_\mathsf{R}$), and the endogenous congestion from re-certification ($\mathrm{EC}_\mathsf{B}$) can be expressed
\begin{align}
\bar q - c 
&= 
\frac{1}{\theta_\mathsf{A}}\mathrm{PD}
,
\\
\bar w
&=
\frac{\mathrm{PD}}{c\mu\theta_\mathsf{A} + \theta \mathrm{PD}}
,
\\
\tilde w
&=
\frac{1}{c\mu \theta_\mathsf{A} } \mathrm{PD}
,
\\
\mathrm{EC}_\mathsf{R}
&=
c \mu_-
+
\frac{\theta_\mathsf{S} + \theta_\mathsf{L}}{\theta_\mathsf{A}}
\mathrm{PD}
,
\\
\mathrm{EC}_\mathsf{B}
&=
c\mu_+ - \lambda + \mathrm{PD}
.
\end{align}
\end{lemma}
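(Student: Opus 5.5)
The plan is to work entirely from the overloaded-regime closed forms in Corollary~\ref{overCor}, together with the preceding lemma giving $\hat\lambda$ and $\tilde\lambda = c\mu$. The whole argument rests on one abbreviation. Set
\[
\Delta = \lambda - \frac{\gamma c\mu_+}{\gamma+\delta_\mathsf{B}} \geq 0 .
\]
Corollary~\ref{overCor} gives $\bar q - c = \Delta/\theta_\mathsf{A}$. Since $\mathrm{PD} = \theta_\mathsf{A}(\bar q - c)$, this yields $\mathrm{PD} = \Delta$, and the first identity $\bar q - c = \mathrm{PD}/\theta_\mathsf{A}$ follows immediately. Each remaining identity is then obtained by substituting $\Delta = \mathrm{PD}$ into the appropriate steady-state expression.

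For the waiting times, I would use the preceding lemma. It gives $\tilde\lambda = c\mu$, so
\[
\tilde w = \frac{\bar q - c}{c\mu} = \frac{\mathrm{PD}}{c\mu\theta_\mathsf{A}} .
\]
It also gives $\hat\lambda = c\mu + (\theta/\theta_\mathsf{A})\,\mathrm{PD}$. Then
\[
\bar w = \frac{\mathrm{PD}/\theta_\mathsf{A}}{c\mu + \theta\,\mathrm{PD}/\theta_\mathsf{A}},
\]
and multiplying numerator and denominator by $\theta_\mathsf{A}$ gives the stated form.

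For the redial congestion, I would use the corollary's expressions $\delta_\mathsf{S}\bar r_\mathsf{S} = (\theta_\mathsf{S}/\theta_\mathsf{A})\Delta$ and $\delta_\mathsf{L}\bar r_\mathsf{L} = c\mu_- + (\theta_\mathsf{L}/\theta_\mathsf{A})\Delta$. Summing these gives $\mathrm{EC}_\mathsf{R}$ directly.

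The recertification congestion $\mathrm{EC}_\mathsf{B} = \delta_\mathsf{B}\bar b$ is the one step requiring a little algebra rather than pure substitution. From $\bar b = c\mu_+/(\gamma+\delta_\mathsf{B})$,
\[
\delta_\mathsf{B}\bar b = c\mu_+\frac{\delta_\mathsf{B}}{\gamma+\delta_\mathsf{B}} = c\mu_+ - \frac{\gamma c\mu_+}{\gamma+\delta_\mathsf{B}} .
\]
Recognizing the last term as $\lambda - \Delta = \lambda - \mathrm{PD}$ gives $c\mu_+ - \lambda + \mathrm{PD}$. I expect no real obstacle. The only care needed is to keep the overloaded assumption, so that $(\cdot)^+$ and $\wedge$ resolve as in Corollary~\ref{overCor}, and to rewrite $\delta_\mathsf{B}/(\gamma+\delta_\mathsf{B})$ as $1 - \gamma/(\gamma+\delta_\mathsf{B})$ in this last step.
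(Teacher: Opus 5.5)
Your proof is correct and follows the paper's argument essentially step for step. Both identify $\mathrm{PD}$ with $\lambda - \gamma c\mu_+/(\gamma+\delta_\mathsf{B})$ via Corollary~\ref{overCor}, use the preceding lemma's $\hat\lambda = c\mu + (\theta/\theta_\mathsf{A})\,\mathrm{PD}$ and $\tilde\lambda = c\mu$ for $\bar w$ and $\tilde w$, sum the two orbit outflows for $\mathrm{EC}_\mathsf{R}$, and rewrite $\delta_\mathsf{B}/(\gamma+\delta_\mathsf{B})$ as $1-\gamma/(\gamma+\delta_\mathsf{B})$ for $\mathrm{EC}_\mathsf{B}$; your citation of the preceding lemma for $\hat\lambda$ is in fact the right one, whereas the paper's proof cites Lemma~\ref{rateLemma} itself at that point.
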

\begin{proof}
First, let us recall that, in the overloaded regime, the procedural denial rate is given by
\begin{align}
\mathrm{PD}
&=
\lambda - \frac{\gamma}{\gamma + \delta_\mathsf{B}} c\mu_+
.
\end{align}
Hence, by immediate consequence of Lemma~\ref{rateLemma}, the total arrival rate can be phrased in terms of the procedural denials as
\begin{align}
\hat \lambda 
&=
c\mu + \frac{\theta}{\theta_\mathsf{A}} \mathrm{PD}
.
\end{align}
The proofs of the expressions for the five performance metrics now quickly follow. First, for the mean number of waiting callers, we can observe
\begin{align}
\bar q - c
&=
\frac{1}{\theta_\mathsf{A}}
\left(
\lambda - \frac{\gamma}{\gamma + \delta_\mathsf{B}}c\mu_+
\right)
=
\frac{1}{\theta_\mathsf{A}}\mathrm{PD}
.
\end{align}
Accordingly, the waiting time and speed to answer can be written
\begin{align}
\bar w 
&=
\frac{\bar q - c}{\hat \lambda}
=
\frac{\frac{1}{\theta_\mathsf{A}}\mathrm{PD}}
{c\mu + \frac{\theta}{\theta_\mathsf{A}}\mathrm{PD}}
,
\end{align}
and
\begin{align}
\tilde w
&=
\frac{\bar q - c}{\tilde \lambda}
=
\frac{\frac{1}{\theta_\mathsf{A}}\mathrm{PD}}{c\mu}
,
\end{align}
which both immediately simplify to the stated expressions. 
Finally, for the endogenous congestion terms, we can draw upon the equilibrium solutions for the overloaded regime in Corollary~\ref{overCor} and find
\begin{align}
\mathrm{EC}_\mathsf{R}
&=
\delta_\mathsf{S}\bar r_\mathsf{S} + \delta_\mathsf{L}\bar r_\mathsf{L}
=
c\mu_-
+ 
\frac{\theta_\mathsf{S} + \theta_\mathsf{L}}{\theta_\mathsf{A}}\left( \lambda - \frac{\gamma}{\gamma + \delta_\mathsf{B}}c\mu_+\right)
=
c\mu_-
+ 
\frac{\theta_\mathsf{S} + \theta_\mathsf{L}}{\theta_\mathsf{A}}
\mathrm{PD}
,
\end{align}
and
\begin{align}
\mathrm{EC}_\mathsf{B}
&=
\delta_\mathsf{B}\bar b
=
\frac{\delta_\mathsf{B}}{\gamma + \delta_\mathsf{B}} c\mu_+
=
c\mu_+
-
\frac{\gamma}{\gamma + \delta_\mathsf{B}}
c\mu_+
=
c\mu_+
+
\mathrm{PD}
-
\lambda
,
\end{align}
which completes the proof.
\end{proof}

\begin{proposition}\label{dPDprop}
Suppose the system is in the overloaded regime. For the parameters of operational design and control, $\delta_\mathsf{B}$, $c$, $\mu$, and $p_+$ where $\mu_+ = \mu p_+$ and $\mu_- = \mu(1-p_+)$, the partial derivatives of the procedural denial rate are
\begin{align}
\frac{\partial \mathrm{PD}}{\partial \delta_\mathsf{B}}
&=
\frac{\gamma c\mu p_+ }{(\gamma + \delta_\mathsf{B})^2}
,
\\
\frac{\partial \mathrm{PD}}{\partial c}
&=
-
\frac{\gamma \mu p_+}{\gamma + \delta_\mathsf{B}}
,
\\
\frac{\partial \mathrm{PD}}{\partial \mu}
&=
-
\frac{\gamma c p_+}{\gamma + \delta_\mathsf{B}}
,
\\
\frac{\partial \mathrm{PD}}{\partial p_+}
&=
-
\frac{\gamma c \mu}{\gamma + \delta_\mathsf{B}}
.
\end{align}
\end{proposition}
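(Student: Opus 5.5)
The plan is to write the procedural denial rate explicitly as a function of the four operational parameters and then differentiate term by term.

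First, in the overloaded regime, Corollary~\ref{overCor} gives $\bar q - c = \frac{1}{\theta_\mathsf{A}}\bigl(\lambda - \frac{\gamma c\mu_+}{\gamma+\delta_\mathsf{B}}\bigr)$. This is nonnegative precisely under the overload condition, so the positive-part operator can be dropped. Hence
\begin{align*}
\mathrm{PD} = \theta_\mathsf{A}(\bar q - c) = \lambda - \frac{\gamma c \mu_+}{\gamma + \delta_\mathsf{B}},
\end{align*}
which is the identity already recorded in the proof of Lemma~\ref{rateLemma}. Substituting $\mu_+ = \mu p_+$ gives
\begin{align*}
\mathrm{PD} = \lambda - \frac{\gamma c \mu p_+}{\gamma+\delta_\mathsf{B}}.
\end{align*}
The parameters $\lambda$, $\gamma$ and $\theta_\mathsf{A}$ are held fixed. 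In particular, $\mathrm{PD}$ does not depend on $\mu_-$, so the reparametrization $\mu_- = \mu(1-p_+)$ plays no role.

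Next, the derivatives follow directly. The expression is multilinear in $c$, $\mu$ and $p_+$, so each of those partial derivatives is $-\gamma/(\gamma+\delta_\mathsf{B})$ times the product of the other two. For $\delta_\mathsf{B}$, differentiating $-(\gamma+\delta_\mathsf{B})^{-1}$ gives $+(\gamma+\delta_\mathsf{B})^{-2}$, so
\begin{align*}
\frac{\partial \mathrm{PD}}{\partial \delta_\mathsf{B}} = \frac{\gamma c\mu p_+}{(\gamma+\delta_\mathsf{B})^2}.
\end{align*}

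The only point requiring care is that these are derivatives of the overloaded-regime formula. They are valid as long as the parameter perturbation keeps $\lambda \ge \gamma c\mu p_+/(\gamma+\delta_\mathsf{B})$, which is an open condition in the interior of that regime. At the boundary they should be read as one-sided derivatives. I would state this caveat briefly rather than treat it as a real obstacle.
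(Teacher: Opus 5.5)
Your proposal is correct and follows essentially the same approach as the paper: write $\mathrm{PD} = \lambda - \gamma c\mu p_+/(\gamma+\delta_\mathsf{B})$ from the overloaded steady state and differentiate term by term. Your remark that these are derivatives of the overloaded-regime formula, and are one-sided at the regime boundary, is a sound caveat that the paper leaves implicit.
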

\begin{proof}
Each of these derivatives immediately follows from the expression of the procedural denial rate in terms of the four operational parameters, namely 
$\mathrm{PD}
=
\lambda
-
{\gamma c \mu p_+}/({\gamma + \delta_\mathsf{B}})$.
\end{proof}

\begin{proposition}\label{qNumWaitprop}
Suppose the system is in the overloaded regime. For the parameters of operational design and control, $\delta_\mathsf{B}$, $c$, $\mu$, and $p_+$ where $\mu_+ = \mu p_+$ and $\mu_- = \mu(1-p_+)$, the partial derivatives of the mean number of waiting callers can be expressed
\begin{align}
\frac{\partial (\bar q - c)}{\partial \delta_\mathsf{B}}
&=
\frac{\gamma c\mu p_+ }{\theta_\mathsf{A}(\gamma + \delta_\mathsf{B})^2}
,
\\
\frac{\partial(\bar q - c)}{\partial c}
&=
-
\frac{\gamma \mu p_+}{\theta_\mathsf{A}(\gamma + \delta_\mathsf{B})}
,
\\
\frac{\partial (\bar q - c)}{\partial \mu}
&=
-
\frac{\gamma c p_+}{\theta_\mathsf{A}(\gamma + \delta_\mathsf{B})}
,
\\
\frac{\partial (\bar q - c)}{\partial p_+}
&=
-
\frac{\gamma c \mu}{\theta_\mathsf{A}(\gamma + \delta_\mathsf{B})}
\end{align}
\end{proposition}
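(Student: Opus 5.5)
The plan is to reduce this statement to Proposition~\ref{dPDprop} via the representation in Lemma~\ref{rateLemma}. In the overloaded regime that lemma gives $\bar q - c = \mathrm{PD}/\theta_\mathsf{A}$, where
\begin{align*}
\mathrm{PD} = \lambda - \frac{\gamma c \mu p_+}{\gamma + \delta_\mathsf{B}}.
\end{align*}

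The key observation is that $\theta_\mathsf{A}$ is a behavioral parameter and does not depend on any of the four operational parameters $\delta_\mathsf{B}$, $c$, $\mu$, $p_+$. The same holds for $\lambda$ and $\gamma$. Hence for each $x \in \{\delta_\mathsf{B}, c, \mu, p_+\}$ linearity of differentiation gives
\begin{align*}
\frac{\partial(\bar q - c)}{\partial x} = \frac{1}{\theta_\mathsf{A}}\frac{\partial \mathrm{PD}}{\partial x}.
\end{align*}
Substituting the four partial derivatives from Proposition~\ref{dPDprop} then yields exactly the four displayed expressions.

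As a sanity check, I would also differentiate directly. For $\delta_\mathsf{B}$, we have $\partial_{\delta_\mathsf{B}}\bigl(-\gamma c\mu p_+ (\gamma+\delta_\mathsf{B})^{-1}\bigr) = \gamma c \mu p_+ (\gamma+\delta_\mathsf{B})^{-2}$. For $c$, $\mu$, and $p_+$, the expression is linear in each, so the derivative is simply the product of the remaining factors with a minus sign. Each is then divided by $\theta_\mathsf{A}$.

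The only subtlety is that the formula holds only within the overloaded regime, i.e.\ when $\lambda \ge \gamma c\mu p_+/(\gamma+\delta_\mathsf{B})$ by Corollary~\ref{overCor}. I would note that the derivatives are understood as one-sided at the boundary, or as interior derivatives where the inequality is strict, so that local perturbations of the parameters remain in the regime. Beyond this bookkeeping there is no real obstacle; the result is a direct corollary.
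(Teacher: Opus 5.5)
Your proposal is correct and matches the paper's proof: the paper also obtains these derivatives from $\bar q - c = \mathrm{PD}/\theta_\mathsf{A}$ (Lemma~\ref{rateLemma}) combined with the partial derivatives of $\mathrm{PD}$ in Proposition~\ref{dPDprop}, using only that $\bar q - c$ is linear in $\mathrm{PD}$. Your remark about working in the interior of the overloaded regime, or taking one-sided derivatives at its boundary, is sensible bookkeeping that the paper leaves implicit.
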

\begin{proof}
These expressions follow immediately from Lemma~\ref{rateLemma} and Proposition~\ref{dPDprop}, as $(\bar q - c)$ is linear in $\mathrm{PD}$.
\end{proof}

\begin{proposition}\label{dWaitprop}
Suppose the system is in the overloaded regime. For the parameters of operational design and control, $\delta_\mathsf{B}$, $c$, $\mu$, and $p_+$ where $\mu_+ = \mu p_+$ and $\mu_- = \mu(1-p_+)$, the partial derivatives of the mean waiting time can be expressed
\begin{align}
\frac{\partial \bar w}{\partial \delta_\mathsf{B}}
&=
\gamma \theta_\mathsf{A} p_+
\left(
\frac{c\mu }{\left(c\mu\theta_\mathsf{A} + \theta \mathrm{PD}\right)(\gamma + \delta_\mathsf{B})}
\right)^2
,
\\
\frac{\partial \bar w}{\partial c}
&=
-
\frac{\lambda \mu \theta_\mathsf{A} }{\left(c\mu\theta_\mathsf{A} + \theta \mathrm{PD}\right)^2}
,
\\
\frac{\partial \bar w}{\partial \mu}
&=
-
\frac{\lambda c \theta_\mathsf{A} }{\left(c\mu\theta_\mathsf{A} + \theta \mathrm{PD}\right)^2}
,
\\
\frac{\partial \bar w}{\partial p_+}
&=
-
\frac{\gamma \theta_\mathsf{A}}{\gamma + \delta_\mathsf{B}}
\left(
\frac{c\mu}{c\mu\theta_\mathsf{A} + \theta \mathrm{PD}}
\right)^2
\end{align}
\end{proposition}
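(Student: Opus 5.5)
The plan is to work entirely from the representation in Lemma~\ref{rateLemma}, $\bar w = \mathrm{PD}/D$ with $D = c\mu\theta_\mathsf{A} + \theta\,\mathrm{PD}$, and differentiate by the quotient rule. The derivatives of $\mathrm{PD}$ are taken from Proposition~\ref{dPDprop}, using $\mathrm{PD} = \lambda - \gamma c\mu p_+/(\gamma+\delta_\mathsf{B})$. The parameters $\lambda$, $\gamma$ and $\theta = \theta_\mathsf{A}+\theta_\mathsf{S}+\theta_\mathsf{L}$ do not depend on any of $\delta_\mathsf{B}$, $c$, $\mu$, $p_+$, so they are held fixed throughout.

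First I will derive one generic formula. For $x \in \{\delta_\mathsf{B}, c, \mu, p_+\}$, write $\mathrm{PD}_x = \partial\mathrm{PD}/\partial x$. Then
\begin{align*}
\frac{\partial \bar w}{\partial x}
=
\frac{\mathrm{PD}_x \left(c\mu\theta_\mathsf{A} + \theta\mathrm{PD}\right) - \mathrm{PD}\left(\theta_\mathsf{A}\,\partial_x(c\mu) + \theta\,\mathrm{PD}_x\right)}{D^2}
=
\frac{\theta_\mathsf{A}\left(c\mu\,\mathrm{PD}_x - \mathrm{PD}\,\partial_x(c\mu)\right)}{D^2}.
\end{align*}
The useful feature is that the $\theta\,\mathrm{PD}\,\mathrm{PD}_x$ terms cancel, so only the single bracket $c\mu\,\mathrm{PD}_x - \mathrm{PD}\,\partial_x(c\mu)$ remains to evaluate for each parameter.

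Next I will substitute case by case.

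\textbf{Case $x = \delta_\mathsf{B}$.} Here $\partial_x(c\mu)=0$. Plugging in $\mathrm{PD}_{\delta_\mathsf{B}}$ gives
\begin{align*}
\frac{\partial \bar w}{\partial \delta_\mathsf{B}} = \frac{\theta_\mathsf{A}\, c\mu\, \gamma c\mu p_+}{(\gamma+\delta_\mathsf{B})^2 D^2},
\end{align*}
which rearranges to the stated squared form.

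\textbf{Case $x = p_+$.} Again $\partial_x(c\mu)=0$. Using $\mathrm{PD}_{p_+} = -\gamma c\mu/(\gamma+\delta_\mathsf{B})$ gives
\begin{align*}
\frac{\partial \bar w}{\partial p_+} = -\frac{\gamma\theta_\mathsf{A}}{\gamma+\delta_\mathsf{B}}\,\frac{(c\mu)^2}{D^2}.
\end{align*}

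\textbf{Case $x = c$.} Here $\partial_c(c\mu) = \mu$, so the bracket is
\begin{align*}
-\mu\left(\frac{\gamma c\mu p_+}{\gamma+\delta_\mathsf{B}} + \mathrm{PD}\right).
\end{align*}
The step that needs attention is recognizing that this sum collapses. By the definition of $\mathrm{PD}$, $\mathrm{PD} + \gamma c\mu p_+/(\gamma+\delta_\mathsf{B}) = \lambda$, so the bracket equals $-\lambda\mu$. This produces the clean numerator $-\lambda\mu\theta_\mathsf{A}$.

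\textbf{Case $x = \mu$.} By the symmetry between $c$ and $\mu$, the same collapse gives $-\lambda c\,\theta_\mathsf{A}$ in the numerator.

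No real obstacle is expected beyond spotting this cancellation to $\lambda$ in the $c$ and $\mu$ cases. I will also note that the overloaded regime keeps $\mathrm{PD}\ge 0$, so $D>0$ and the quotient is well defined.
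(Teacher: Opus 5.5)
Your proposal is correct and follows the paper's proof essentially line for line. The paper also applies the quotient rule to $\bar w = \mathrm{PD}/(c\mu\theta_\mathsf{A} + \theta\,\mathrm{PD})$, cancels the $\theta\,\mathrm{PD}\,\partial_x\mathrm{PD}$ terms to reach $\theta_\mathsf{A}(c\mu\,\partial_x\mathrm{PD} - \mathrm{PD}\,\partial_x(c\mu))/(c\mu\theta_\mathsf{A} + \theta\,\mathrm{PD})^2$, and then substitutes Proposition~\ref{dPDprop}; your explicit collapse $\mathrm{PD} + \gamma c\mu p_+/(\gamma+\delta_\mathsf{B}) = \lambda$ in the $c$ and $\mu$ cases is exactly the simplification the paper leaves implicit.
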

\begin{proof}
By Lemma~\ref{rateLemma}, for each $x \in \{\delta_\mathsf{B}, c, \mu, p_+\}$, the partial derivative of the mean waiting time can be related to the partial derivative of the procedural denial rate via
\begin{align}
\frac{\partial \bar w}{\partial x}
&=
\frac{\partial}{\partial x}
\left(
\frac{\mathrm{PD}}{c\mu\theta_\mathsf{A} + \theta \mathrm{PD}}
\right)
\\
&=
\frac{1}{c\mu\theta_\mathsf{A} + \theta \mathrm{PD}}
\frac{\partial \mathrm{PD}}{\partial x}
-
\frac{\mathrm{PD}}{\left(c\mu\theta_\mathsf{A} + \theta \mathrm{PD}\right)^2}
\left(
\theta_\mathsf{A} \frac{\partial (c\mu)}{\partial x}  + \theta \frac{\partial \mathrm{PD}}{\partial x}
\right)
\\
&=
\frac{c\mu\theta_\mathsf{A}}{\left(c\mu\theta_\mathsf{A} + \theta \mathrm{PD}\right)^2}
\frac{\partial \mathrm{PD}}{\partial x}
-
\frac{\theta_\mathsf{A} \mathrm{PD}}{\left(c\mu\theta_\mathsf{A} + \theta \mathrm{PD}\right)^2}
 \frac{\partial (c\mu)}{\partial x}  
 \\
&=
\frac{\theta_\mathsf{A} }{\left(c\mu\theta_\mathsf{A} + \theta \mathrm{PD}\right)^2}
\left(
c\mu
\frac{\partial \mathrm{PD}}{\partial x}
-
\mathrm{PD}
 \frac{\partial (c\mu)}{\partial x}  
 \right)
 .
\end{align}
Hence, by consequence of the partial derivations of $\mathrm{PD}$ provided in Proposition~\ref{dPDprop}, we immediately obtain the partial derivatives of $\bar w$.
\end{proof}

\begin{proposition}\label{dASAprop}
Suppose the system is in the overloaded regime. For the parameters of operational design and control, $\delta_\mathsf{B}$, $c$, $\mu$, and $p_+$ where $\mu_+ = \mu p_+$ and $\mu_- = \mu(1-p_+)$, the partial derivatives of the average speed to answer can be expressed
\begin{align}
\frac{\partial \tilde w}{\partial \delta_\mathsf{B}}
&=
\frac{\gamma p_+ }{\theta_\mathsf{A}(\gamma + \delta_\mathsf{B})^2}
,
\\
\frac{\partial \tilde w}{\partial c}
&=
-
\frac{\gamma p_+}{c\theta_\mathsf{A}(\gamma + \delta_\mathsf{B})}
-
\frac{\mathrm{PD}}{c^2 \mu \theta_\mathsf{A}}
,
\\
\frac{\partial \tilde w}{\partial \mu}
&=
-
\frac{\gamma p_+}{\mu \theta_\mathsf{A}(\gamma + \delta_\mathsf{B})}
-
\frac{\mathrm{PD}}{c\mu^2\theta_\mathsf{A}}
,
\\
\frac{\partial \tilde w}{\partial p_+}
&=
-
\frac{\gamma }{\theta_\mathsf{A}(\gamma + \delta_\mathsf{B})}
.
\end{align}
\end{proposition}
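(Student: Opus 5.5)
The plan is to follow the same route as the proofs of Propositions~\ref{qNumWaitprop} and~\ref{dWaitprop}: first reduce $\tilde w$ to a function of $\mathrm{PD}$ and the staffing capacity $c\mu$, then differentiate by the product rule, and finally substitute the derivatives of $\mathrm{PD}$ from Proposition~\ref{dPDprop}.

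By Lemma~\ref{rateLemma}, in the overloaded regime
\begin{align*}
\tilde w = \frac{\mathrm{PD}}{c\mu\theta_\mathsf{A}},
\end{align*}
where $\theta_\mathsf{A}$ does not depend on any of the four operational parameters. For each $x \in \{\delta_\mathsf{B}, c, \mu, p_+\}$ this gives
\begin{align*}
\frac{\partial \tilde w}{\partial x} = \frac{1}{c\mu\theta_\mathsf{A}}\frac{\partial \mathrm{PD}}{\partial x} - \frac{\mathrm{PD}}{(c\mu)^2\theta_\mathsf{A}}\frac{\partial (c\mu)}{\partial x}.
\end{align*}

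The second term vanishes when $x=\delta_\mathsf{B}$ or $x=p_+$, since $c\mu$ does not depend on either. In those two cases we only divide $\partial \mathrm{PD}/\partial x$ from Proposition~\ref{dPDprop} by $c\mu\theta_\mathsf{A}$:
\begin{itemize}
\item for $\delta_\mathsf{B}$, this yields $\gamma p_+/(\theta_\mathsf{A}(\gamma+\delta_\mathsf{B})^2)$;
\item for $p_+$, this yields $-\gamma/(\theta_\mathsf{A}(\gamma+\delta_\mathsf{B}))$.
\end{itemize}

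For $x=c$, we have $\partial(c\mu)/\partial c = \mu$. The first term is $-\gamma\mu p_+/(c\mu\theta_\mathsf{A}(\gamma+\delta_\mathsf{B})) = -\gamma p_+/(c\theta_\mathsf{A}(\gamma+\delta_\mathsf{B}))$. The second term is $-\mathrm{PD}\,\mu/(c^2\mu^2\theta_\mathsf{A}) = -\mathrm{PD}/(c^2\mu\theta_\mathsf{A})$. The case $x=\mu$ is identical with the roles of $c$ and $\mu$ exchanged, because both $\mathrm{PD}$ and $c\mu$ are symmetric in $c$ and $\mu$.

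There is no real obstacle here. The only care needed is to keep $\mathrm{PD}$ unexpanded in the second term, so the result matches the stated form, and to confirm that $\lambda$, $\gamma$ and $\theta_\mathsf{A}$ are held fixed throughout.
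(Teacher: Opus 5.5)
Your proposal is correct and takes the paper's approach. The paper also writes $\tilde w = \mathrm{PD}/(c\mu\theta_\mathsf{A})$ via Lemma~\ref{rateLemma}, divides $\partial \mathrm{PD}/\partial x$ by $c\mu\theta_\mathsf{A}$ for $x\in\{\delta_\mathsf{B},p_+\}$, applies the product rule in $c$ and $\mu$, and substitutes from Proposition~\ref{dPDprop}; your term-by-term check is simply more explicit.
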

\begin{proof}
From Lemma~\ref{rateLemma}, we have that for $x \in \{c,\mu\}$, 
\begin{align}
\frac{\partial \tilde w}{\partial x}
&=
\frac{1}{c\mu \theta_\mathsf{A}} \frac{\partial \mathrm{PD}}{\partial x}
+
\frac{\mathrm{PD}}{\theta_\mathsf{A}}
\frac{\partial}{\partial x}
\left(
\frac{1}{c\mu}
\right)
,
\end{align}
and, for $x \in \{\delta_\mathsf{B}, p_+\}$, the partial derivative of $\tilde w$ is simply that of $\mathrm{PD}$ divided by $c\mu \theta_\mathsf{A}$. Hence, via Proposition~\ref{dPDprop}, we achieve the stated expressions.
\end{proof}

\begin{proposition}\label{dECRprop}
Suppose the system is in the overloaded regime. For the parameters of operational design and control, $\delta_\mathsf{B}$, $c$, $\mu$, and $p_+$ where $\mu_+ = \mu p_+$ and $\mu_- = \mu(1-p_+)$, the partial derivatives of the endogenous congestion from re-dials can be expressed
\begin{align}
\frac{\partial \mathrm{EC}_\mathsf{R}}{\partial \delta_\mathsf{B}}
&=
\frac{\gamma c\mu p_+ (\theta_\mathsf{S} + \theta_\mathsf{L}) }{\theta_\mathsf{A}(\gamma + \delta_\mathsf{B})^2}
,
\\
\frac{\partial \mathrm{EC}_\mathsf{R}}{\partial c}
&=
 \mu (1-p_+)
-
\frac{\gamma \mu p_+ (\theta_\mathsf{S} + \theta_\mathsf{L}) }
{\theta_\mathsf{A}(\gamma + \delta_\mathsf{B})}
,
\\
\frac{\partial \mathrm{EC}_\mathsf{R}}{\partial \mu}
&=
c (1-p_+)
-
\frac{\gamma c p_+ (\theta_\mathsf{S} + \theta_\mathsf{L}) }
{\theta_\mathsf{A}(\gamma + \delta_\mathsf{B})}
,
\\
\frac{\partial \mathrm{EC}_\mathsf{R}}{\partial p_+}
&=
-c \mu 
-
\frac{\gamma c \mu (\theta_\mathsf{S} + \theta_\mathsf{L}) }
{\theta_\mathsf{A}(\gamma + \delta_\mathsf{B})}
\end{align}
\end{proposition}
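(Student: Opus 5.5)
Following the pattern of the preceding propositions, the plan is to start from the representation of the endogenous congestion from re-dials given in Lemma~\ref{rateLemma},
\begin{align*}
\mathrm{EC}_\mathsf{R} = c\mu_- + \frac{\theta_\mathsf{S}+\theta_\mathsf{L}}{\theta_\mathsf{A}}\,\mathrm{PD} = c\mu(1-p_+) + \frac{\theta_\mathsf{S}+\theta_\mathsf{L}}{\theta_\mathsf{A}}\,\mathrm{PD},
\end{align*}
where we substitute $\mu_- = \mu(1-p_+)$. Since the abandonment parameters $\theta_\mathsf{A},\theta_\mathsf{S},\theta_\mathsf{L}$ do not depend on any of the four operational parameters, for each $x\in\{\delta_\mathsf{B},c,\mu,p_+\}$ we get
\begin{align*}
\frac{\partial \mathrm{EC}_\mathsf{R}}{\partial x} = \frac{\partial\bigl(c\mu(1-p_+)\bigr)}{\partial x} + \frac{\theta_\mathsf{S}+\theta_\mathsf{L}}{\theta_\mathsf{A}}\,\frac{\partial \mathrm{PD}}{\partial x}.
\end{align*}

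The second term is read off directly from Proposition~\ref{dPDprop}. The first term is elementary:
\begin{itemize}
\item it vanishes for $x=\delta_\mathsf{B}$;
\item it equals $\mu(1-p_+)$ for $x=c$;
\item it equals $c(1-p_+)$ for $x=\mu$;
\item it equals $-c\mu$ for $x=p_+$.
\end{itemize}
Combining these terms yields each of the four stated expressions.

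For example, for $x=\delta_\mathsf{B}$, only the PD term survives, giving $\gamma c\mu p_+(\theta_\mathsf{S}+\theta_\mathsf{L})/(\theta_\mathsf{A}(\gamma+\delta_\mathsf{B})^2)$. For $x=p_+$, both contributions are negative and add to the stated formula.

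There is no real obstacle. The one point needing care is to hold $\mu$ fixed when differentiating in $p_+$, and likewise $p_+$ fixed when differentiating in $\mu$, so that the $c\mu_-$ term is differentiated consistently with the parametrization $\mu_\pm$ used in Proposition~\ref{dPDprop}. The overloaded-regime assumption only enters through the validity of Lemma~\ref{rateLemma}. It also ensures that PD is given by the smooth expression $\lambda-\gamma c\mu p_+/(\gamma+\delta_\mathsf{B})$, so no positive-part kinks arise in the interior of the regime.
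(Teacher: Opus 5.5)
Your proposal is correct and is essentially the paper's own proof. The paper likewise differentiates the Lemma~\ref{rateLemma} representation $\mathrm{EC}_\mathsf{R} = c\mu(1-p_+) + \frac{\theta_\mathsf{S}+\theta_\mathsf{L}}{\theta_\mathsf{A}}\mathrm{PD}$ term by term and reads off $\partial \mathrm{PD}/\partial x$ from Proposition~\ref{dPDprop}.
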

\begin{proof}
By Lemma~\ref{rateLemma}, the partial derivative of the re-dial endogenous congestion is
\begin{align}
\frac{\partial \mathrm{EC}_\mathsf{R}}{\partial x}
&=
\frac{\partial}{\partial x}\left(c \mu (1-p_+)\right)
+
\frac{\theta_\mathsf{S} + \theta_\mathsf{L}}{\theta_\mathsf{A}}
\frac{\partial \mathrm{PD}}{\partial x}
\end{align}
for each $x \in \{\delta_\mathsf{B}, c, \mu, p_+\}$. Hence, by Propositon~\ref{dPDprop}, we immediately achieve the stated results.
\end{proof}

\begin{proposition}\label{dECBprop}
Suppose the system is in the overloaded regime. For the parameters of operational design and control, $\delta_\mathsf{B}$, $c$, $\mu$, and $p_+$ where $\mu_+ = \mu p_+$ and $\mu_- = \mu(1-p_+)$, the partial derivatives of the endogenous congestion from re-certifications can be expressed
\begin{align}
\frac{\partial \mathrm{EC}_\mathsf{B}}{\partial \delta_\mathsf{B}}
&=
\frac{\gamma c\mu p_+ }{(\gamma + \delta_\mathsf{B})^2}
,
\\
\frac{\partial \mathrm{EC}_\mathsf{B}}{\partial c}
&=
\frac{\delta_\mathsf{B} \mu p_+}{\gamma + \delta_\mathsf{B}}
,
\\
\frac{\partial \mathrm{EC}_\mathsf{B}}{\partial \mu}
&=
\frac{\delta_\mathsf{B} c p_+}{\gamma + \delta_\mathsf{B}}
,
\\
\frac{\partial \mathrm{EC}_\mathsf{B}}{\partial p_+}
&=
\frac{\delta_\mathsf{B} c \mu}{\gamma + \delta_\mathsf{B}}
.
\end{align}
\end{proposition}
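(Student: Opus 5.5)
The plan is to express $\mathrm{EC}_\mathsf{B}$ explicitly in terms of the four operational parameters and then differentiate term by term, exactly as in Proposition~\ref{dPDprop}. From Corollary~\ref{overCor}, in the overloaded regime $\bar b = c\mu_+/(\gamma+\delta_\mathsf{B})$. Substituting $\mu_+ = \mu p_+$ gives
\begin{align}
\mathrm{EC}_\mathsf{B} = \delta_\mathsf{B}\bar b = \frac{\delta_\mathsf{B}\, c\mu p_+}{\gamma+\delta_\mathsf{B}}.
\end{align}
Equivalently, one could start from Lemma~\ref{rateLemma}, which gives $\mathrm{EC}_\mathsf{B} = c\mu_+ - \lambda + \mathrm{PD}$, and combine it with Proposition~\ref{dPDprop}. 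Both routes should agree, and the second one offers a useful cross-check.

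For $x\in\{c,\mu,p_+\}$ the expression above is linear in $x$, with the other factors held fixed. The derivative is therefore just $\mathrm{EC}_\mathsf{B}/x$, which gives $\delta_\mathsf{B}\mu p_+/(\gamma+\delta_\mathsf{B})$, $\delta_\mathsf{B}c p_+/(\gamma+\delta_\mathsf{B})$, and $\delta_\mathsf{B}c\mu/(\gamma+\delta_\mathsf{B})$ respectively. As a check via the second route, consider $\partial/\partial c$. The term $c\mu p_+$ contributes $\mu p_+$, and $\partial\mathrm{PD}/\partial c = -\gamma\mu p_+/(\gamma+\delta_\mathsf{B})$. Their sum is $\mu p_+\delta_\mathsf{B}/(\gamma+\delta_\mathsf{B})$, which matches.

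The step needing the most care is $\delta_\mathsf{B}$, since it appears in both numerator and denominator. Writing $\delta_\mathsf{B}/(\gamma+\delta_\mathsf{B}) = 1-\gamma/(\gamma+\delta_\mathsf{B})$, its derivative is $\gamma/(\gamma+\delta_\mathsf{B})^2$. Multiplying by $c\mu p_+$ gives the stated result. This also agrees with the Lemma~\ref{rateLemma} route: $c\mu_+$ and $\lambda$ do not depend on $\delta_\mathsf{B}$, so $\partial \mathrm{EC}_\mathsf{B}/\partial\delta_\mathsf{B} = \partial\mathrm{PD}/\partial\delta_\mathsf{B}$.

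Finally, we should note that these formulas are valid throughout the overloaded regime. The overloaded condition of Corollary~\ref{overCor} is an open inequality $\lambda > \gamma c\mu p_+/(\gamma+\delta_\mathsf{B})$ in the interior. On that set the closed form for $\bar b$ holds, so the derivatives are those of this smooth expression. At the boundary the derivatives should be understood as one-sided.
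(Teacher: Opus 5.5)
Your proof is correct and essentially matches the paper's: the paper differentiates the decomposition $\mathrm{EC}_\mathsf{B} = c\mu p_+ - \lambda + \mathrm{PD}$ from Lemma~\ref{rateLemma} using Proposition~\ref{dPDprop}, which is exactly your cross-check route. Your primary route, differentiating $\delta_\mathsf{B} c\mu p_+/(\gamma+\delta_\mathsf{B})$ directly, is an equivalent one-line computation. Your remark that the derivatives are one-sided at the boundary $\lambda = \gamma c\mu p_+/(\gamma+\delta_\mathsf{B})$, where $\bar b$ has a kink, is a correct refinement that the paper leaves implicit.
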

\begin{proof}
Lemma~\ref{rateLemma} implies that the partial derivative of the re-certification endogenous congestion is simply a sum of two partial derivatives,
\begin{align}
\frac{\partial \mathrm{EC}_\mathsf{B}}{\partial x}
&=
\frac{\partial}{\partial x}\left(c\mu p_+\right) + \frac{\partial \mathrm{PD}}{\partial x}
,
\end{align}
and from this relationship, we immediately achieve the proof via simplifying from Proposition~\ref{dPDprop}.
\end{proof}

\end{document}